\documentclass[12pt]{article}
\usepackage{graphics}
\usepackage{amsmath}
\usepackage{amssymb}
\usepackage{graphicx}
\usepackage{makeidx}
\usepackage{mathrsfs}
\usepackage{amsfonts}
\usepackage[mathscr]{eucal}
\usepackage{amsmath}

\DeclareGraphicsExtensions{.eps,.pdf,.jpg,.png}

\def\RR{\vbox {\hbox to 8.9pt {I\hskip-2.1pt R\hfil}}}

\def\pni{\par\noindent}
\def\vsh{\smallskip}

\def\vsp{\vsh\pni} 

\newtheorem{theorem}{Theorem}

\newtheorem{remark}[theorem]{Remark}

\newenvironment{proof}[1][Proof]{\noindent\textbf{#1.} }{\ \rule{0.5em}{0.5em}}
\numberwithin{equation}{section}
\newcommand{\imagpart}[1]{\text{Im}\left({#1}\right)}
\begin{document}
\font\title=cmbx12 scaled\magstep2
\font\bfs=cmbx12 scaled\magstep1
\font\little=cmr10
\begin{center}
{\title Delta-pulse solution \\ in Zener viscoelastic model}
 \\  [0.10truecm]
 Andrea MENTRELLI$^{(1), (2), (3)}$ 
 \\
 Juan Luis GONZ{\'A}LEZ  SANTANDER$^{(4)}$,
 \\ Francesco MAINARDI$^{(5)}$, 
\\[0.25truecm]
$^{(1)}$ Department of Mathematics, University of Bologna Piazza di Porta San Donato 5, Bologna, 4016. Italy;
{andrea.mentrelli@unibo.it}
\\
$^{(2)}$ Alma Mater Research Center of Applied Mathematics AM$^2$,
  University of Bologna.  Via Saragozza 8,  Bologna, 40123 Italy;
\\ 
$^{(3)}$ Section of Bologna, I.S. FLAG, Italian National Institute for    Nuclear
Physics  (I.N.F.N.),  Viale Berti Pichat 6/2, Bologna, 40127 Bologna, Italy;
\\
$^{(4)}$  Department of Mathematics, University of  Oviedo.
  C/ Leopoldo Calvo Sotelo 18,  Oviedo 33007, Asturias, Spain;
 {gonzalezmarjuan@uniovi.es}
\\
$^{(5)}$  
 Department of Physics and Astronomy, University of Bologna and INFN,
 Via Irnerio 46,  Bologna 40126, Italy;
\\{ francesco.mainardi@unibo.it; mainardi@bo.infn.it, fracalmo@gmail.com}
\\

 \vskip 0.25truecm

\end{center}
\begin{abstract}
We derive the integral representation of the solution for the propagation of a delta-pulse (impulsive wave) in a semi-infinite, homogeneous, linear viscoelastic medium governed by the Zener model. Starting from the Bromwich integral representation of the response function, we obtain a closed-form integral representation by analytically inverting the relevant Laplace transforms. The result is expressed in terms of modified Bessel functions of the first kind and Macdonald functions of half-integer order, and is shown to reduce to the known Maxwell model solution in the appropriate limit. As an independent computational approach, we derive the steepest descent path (SDP) associated with the phase function of the Bromwich integral, characterizing its saddle points and showing that the SDP can be expressed explicitly as the zero locus of a sixth-degree polynomial in the imaginary part of the complex variable. The two methods are compared numerically for several values of the model parameters, confirming their agreement. While the integral representation provides analytical insight into the structure of the solution, the steepest descent method requires no explicit inversion of the Laplace transform and may therefore prove especially valuable in more general viscoelastic settings where a closed-form integral representation is not available.
\newpage
 \noindent
 {\bf Keywords}:
{Zener model, Viscoelastic waves.  Steepest descent method}
\vsp {\bf Mathematics Subject Classification (MSC)}:
{74D05, 44A10, 33C10}

\end{abstract}

\section{Introduction} \label{Sect: Introduction}

The purpose of this paper is to compare two different methods to compute the delta-pulse solution for the Zener model in linear viscoelasticity. This model, also referred to as the Standard Linear Solid (SLS), is quite relevant in the rheology of the Earth. It is characterized by exponential creep and relaxation functions of time through the retardation $\tau_\epsilon$ and relaxation $\tau_\sigma$ times, with $0<\tau_\sigma< \tau_\epsilon< \infty$. In the limit of infinite retardation time, the SLS reduces to the classical Maxwell model, which exhibits a linear creep function. Both models are often used in modeling transient waves because they are the simplest models in linear viscoelasticity exhibiting a finite wave front velocity. To our knowledge, the problem of the propagation of transient waves in these models was first considered by Sir Harold Jeffreys in 1932 \cite{Jeffreys} and revisited by Mainardi in 1972 \cite{Mainardi1972}. Due to its relevance in Earth's rheology, several authors have considered the propagation of transient waves in the SLS model, including Morrison \cite{Morrison} and Chin \cite{Chin}, using Laplace transforms; Buchen and Mainardi \cite{Buchen1975}, using series expansions; and Mainardi and Turchetti \cite{Mainardi1975}, using Pad\'e approximants. The method in \cite{Buchen1975} has been revisited and improved more recently by Colombaro et al. \cite{Colombaro}.

Both methods compared here are related to how the Laplace transform of the delta-pulse solution is inverted. After recalling the essentials of linear viscoelasticity in Section \ref{Section 2: Essentials} and the related wave equations with the corresponding index of refraction in Section \ref{Section 3: Wave equations}, the core of the paper is outlined in the next two sections. In Section \ref{Section 4: Integral representation}, we handle the integral representation of the delta-pulse solution using special functions, taking advantage of the properties of the Laplace transform. In Section \ref{Section 5: SDP}, the inversion of the Laplace transform is carried out by integrating along the steepest descent path, equivalent to the classical Bromwich path. This novel technique has been recently used by the authors \cite{MainardiMentrelliSantander} with success for the Maxwell model, where the analytical solution is known in terms of modified Bessel functions. For this technique, we are inspired by the analysis of Sommerfeld and Brillouin for dispersive electromagnetic waves in dielectrics \cite{Brillouin}; however, unlike their approach, we consider the whole contribution of the steepest descent path and not only the asymptotic approximation obtained by restricting to the region close to the relevant saddle points. Finally, Section \ref{Section 6: Conclusions} is devoted to the conclusions.

\section{Essentials of linear viscoelasticity} \label{Section 2: Essentials}

According to the classical theory of linear viscoelasticity, if a material is linearly viscoelastic, the stress at a given material point depends on the entire time history of strain at that point, and not only on its instantaneous value. We can view a viscoelastic body as a linear system in which the input can be either the stress or the strain, and the output is, correspondingly, the strain or the stress. We limit our consideration to the one-dimensional case, where $x$ denotes the spatial coordinate and $t$ denotes time. From now on, we denote any response variable, such as the stress $\sigma(x,t)$, the strain $\epsilon(x,t)$, the particle displacement $u(x,t)$, or the particle velocity $v(x,t)$, by $r(x,t)$.

It is known that a linearly viscoelastic body has properties that are (in some sense) intermediate between those of a perfectly elastic solid, similar to a spring governed by Hooke's law, and a viscous fluid, similar to a dashpot governed by Newton's law. For a homogeneous body of density $\rho$, these laws are given by:
\begin{equation}
	\text{Hooke's law:} \; \sigma(x,t) = m \, \epsilon(x,t), \quad
	\text{Newton's law:} \; \sigma(x,t) = \eta \, \frac{\partial \epsilon}{\partial t} (x,t),
	\label{eq:HOOKE-NEWTON}
\end{equation}
where $m$ denotes the elastic modulus and $\eta$ the dynamic viscosity.

As a consequence, the stress--strain relation of a generic viscoelastic material can be expressed in terms of integrals depending on the past history through the so-called material functions $J(t)$ (the creep compliance) and $G(t)$ (the relaxation modulus). We recall that $J(t)$ represents the strain produced by a unit step of stress, whereas $G(t)$ represents the stress produced by a unit step of strain.

Since we consider the medium to be undisturbed for $t<0$, it is convenient to use the Laplace transforms of the material functions, denoted by:
\begin{equation}  \label{eq:LT}
	\widetilde J(s) = \mathcal{L}\left[J(t); s\right], \qquad
	\widetilde G(s) = \mathcal{L}\left[G(t); s\right].
\end{equation}
The relationship between these material functions is given by
\begin{equation}
	J(t) * G(t) := \int_0^t J(t-\tau)\,G(\tau)\,d\tau
	= \int_0^t J(\tau)\,G(t-\tau)\,d\tau = t,
	\label{eq:(2.9_}
\end{equation}
which implies, in the Laplace domain, the reciprocity relation:
\begin{equation}  \label{eq:(2.8)}
	s\, \widetilde J(s) = \frac{1}{s\, \widetilde G(s)} \iff
	\widetilde J(s)\, \widetilde G(s) = \frac{1}{s^2}.
\end{equation}

We also use the following notation:
\begin{equation}
	J_g = J(0), \; J_e = J(\infty); \; G_g = G(0), \; G_e = G(\infty),
	\label{eq:J-G}
\end{equation}
where
\begin{equation}
	J_g\, G_g = J_e \, G_e = 1.
	\label{eq:J-G_2}
\end{equation}

Following the standard text on linear viscoelasticity by Mainardi \cite{MainardiBook}, viscoelastic bodies are classified into four types according to their instantaneous and equilibrium responses (see Table \ref{t2.1}).
\begin{table}[htbp]
	\centering
	\begin{tabular}{|c||c|c||c|c|}
		\hline
		Type & $J_g$ & $J_e$ & $G_g$ & $G_e$ \\ \hline
		I & $>0$ & $<\infty$ & $<\infty$ & $>0$ \\
		II & $>0$ & $=\infty$ & $<\infty$ & $=0$ \\
		III & $=0$ & $<\infty$ & $=\infty$ & $>0$ \\
		IV & $=0$ & $=\infty$ & $=\infty$ & $=0$ \\ \hline
	\end{tabular}
	\caption{The four types of viscoelasticity.}
	\label{t2.1}
\end{table}

Restricting our attention to mechanical models described by networks of springs and dashpots, the integral equations reduce to differential equations with constant coefficients. The simplest models of type I and II are, respectively, the Zener and Maxwell spring--dashpot models.

The Zener model is governed by the following stress--strain relation:
\begin{equation}
	\left[1 + \tau_\sigma\, \frac{d}{dt}\right] \sigma(t) =
	m \left[ 1 + \tau_\epsilon\, \frac{d}{dt}\right] \epsilon(t),
	\label{eq:Zener}
\end{equation}
where $\tau_\epsilon$ is the retardation time (strain retardation under constant stress) and $\tau_\sigma$ is the relaxation time (stress relaxation under constant strain), with $\tau_\epsilon > \tau_\sigma > 0$. The material functions derived from the above constitutive equation are given by \cite[Eqn.~2.19b]{MainardiBook}:
\begin{equation}
	\begin{cases}
		{\displaystyle J(t) = J_g + J_1 \left( 1-\mathrm{e}^{-t/\tau_\epsilon}\right)}, &
		{\displaystyle J_g = \frac{1}{m} \frac{\tau_\sigma}{\tau_\epsilon}, \;
			J_1 = \frac{1}{m}\left(1- \frac{\tau_\sigma}{\tau_\epsilon}\right)}; \\
		{\displaystyle G(t) = G_e + G_1 \,\mathrm{e}^{-t/\tau_\sigma}}, &
		{\displaystyle G_e = m, \; G_1 = m\left( \frac{\tau_\epsilon}{\tau_\sigma} -1\right)}.
	\end{cases}
	\label{eq:SLS JG  (2.19b)}
\end{equation}

The Maxwell model is governed by the following stress--strain relation:
\begin{equation}
	\sigma(t) + \tau_\sigma \, \frac{d\sigma}{dt} = \eta\, \frac{d\epsilon}{dt},
	\label{eq:Maxwell}
\end{equation}
where $\tau_\sigma$ is the stress relaxation time (under constant strain). The material functions derived from the above constitutive equation are given by \cite[Eqn.~2.18b]{MainardiBook}:
\begin{equation}
	\begin{cases}
		{\displaystyle J(t) = J_g + J_+\,t}, &
		{\displaystyle J_g= \frac{\tau_\sigma}{\eta}, \; J_+ = \frac{1}{\eta}}\,; \\
		{\displaystyle G(t) = G_1\,\mathrm{e}^{-t/\tau_\sigma}}, &
		{\displaystyle G_1 = \frac{\eta}{\tau_\sigma}}.
	\end{cases}
	\label{eq:MAXWELL JG  (2.18b)}
\end{equation}

\section{Wave equations} \label{Section 3: Wave equations}

The one-dimensional equation describing the transmission of plane waves in a perfectly elastic homogeneous medium is known as the D'Alembert equation:
\begin{equation}  \label{eq:wave}
	\frac{\partial^2 r}{\partial t^2} = \frac{m}{\rho} \frac{\partial^2 r}{\partial x^2}.
\end{equation}
The D'Alembert equation is a hyperbolic equation with wavefront velocity $c = \sqrt{m/\rho}$, as discussed in standard textbooks on wave propagation. When the medium is not perfectly elastic but viscoelastic, we observe that only mechanical models of type I and II (i.e., with $J_g > 0$) exhibit a finite wavefront velocity for mechanical waves:
\begin{equation}
	c = \sqrt{\frac{1}{J_g\,\rho}} = \sqrt{\frac{G_g}{\rho}},
	\label{c_def}
\end{equation}
The wave equation for the Zener model is a hyperbolic equation of third order in time and second order in space \cite[Eqn.~4.28]{MainardiBook}:
\begin{equation}
	\left\{
	\begin{array}{l}
		{\displaystyle \frac{\partial}{\partial t}\left( \frac{\partial^2 r}{\partial t^2}
			- c^2\, \frac{\partial^2 r}{\partial x^2} \right) + \frac{1}{\tau_\sigma} \left( \frac{\partial^2 r}{\partial t^2}
			- c_0^2\, \frac{\partial^2 r}{\partial x^2} \right)} = 0, \\[6pt]
		c_0^2 = \dfrac{m}{\rho} = \dfrac{G_e}{\rho}, \quad
		c^2 = c_0^2 \, \dfrac{\tau_\epsilon}{\tau_\sigma} = \dfrac{G_g}{\rho},
	\end{array}
	\right.
	\label{eq:ZENER (4.28)}
\end{equation}
with characteristics (related to $c$) and sub-characteristics (related to $c_0$), as pointed out by Chin \cite{Chin}.

For the Maxwell model, we obtain a hyperbolic equation of second order in both time and space, known as the telegraph equation. This equation is a particular case of the Klein--Gordon equation with dissipation, recently studied by the authors in \cite{MainardiMentrelliSantander}:
\begin{equation}
	\frac{\partial^2 r}{\partial t^2} + \frac{1}{\tau_\sigma} \, \frac{\partial r}{\partial t}
	= c^2 \, \frac{\partial^2 r}{\partial x^2},
	\quad c^2 = \frac{\eta}{\tau_\sigma\,\rho} = \dfrac{G_g}{\rho}.
	\label{eq:MAXWELL(4.27)}
\end{equation}

\subsection{The complex index of refraction}

From now on, we consider wave propagation in homogeneous, semi-infinite, linear viscoelastic media. In particular, we consider the so-called \textit{impact waves}, so named because they are generated by an impact on an initially quiescent medium. The use of the Laplace transform, defined as
\begin{equation*}
	\mathcal{L}\left[f(t); s\right] = \int_{0}^{\infty} e^{-st} f(t)\,dt,
\end{equation*}
allows us to obtain integral representations of these waves through the concept of a complex index of refraction.

As a consequence, based again on the book by Mainardi \cite[Sect.~4.3.4]{MainardiBook} and our recent paper \cite{MainardiMentrelliSantander}, we take the initial impulse
\begin{equation}
	r(0,t) = \delta(t),
	\label{r(0,t)=delta(t)}
\end{equation}
so that the solution of the wave equations for our rheological models is obtained by inverting a Laplace transform, that is, by evaluating the following Bromwich integral:
\begin{eqnarray}
	r_{\delta}(x,t) &=& \mathcal{L}^{-1}\left[
	\exp\left(-\frac{x}{c}\,s\,n(s)\right); t \right]
	\label{r_delta=L-1[n(s)]} \\
	&=& \frac{1}{2\pi i} \int_{Br}
	\exp\left( s\left[ t - \frac{x}{c} n(s) \right] \right)\, ds,
	\nonumber
\end{eqnarray}
where $Br$ denotes the Bromwich path, and $n(s)$ denotes the complex index of refraction. For the Zener model, we obtain
\begin{equation}
	n(s) = \sqrt{\frac{s + 1/\tau_{\sigma}}{s + 1/\tau_{\epsilon}}},
	\label{eq:SLS n(s) (4.51)}
\end{equation}
where $\tau_{\epsilon} > \tau_{\sigma} > 0$. The index of refraction for the Maxwell model is obtained from the Zener model by setting $\tau_{\epsilon} = \infty$, so that
\begin{equation}
	n(s) = \sqrt{1 + \frac{1}{\tau_{\sigma}\,s}}\,, \quad \tau_{\sigma} > 0.
	\label{eq:MAXWELL n(s) (4.52)}
\end{equation}
\section{Integral representation of the delta-pulse solution} \label{Section 4: Integral representation}

According to \eqref{r_delta=L-1[n(s)]} and \eqref{eq:SLS n(s) (4.51)}, we have to solve the inverse Laplace transform%
\begin{equation}
	r_{\delta }\left( x,t\right) =\mathcal{L}^{-1}\left[ \mathrm{exp}\left(
	-\chi s\sqrt{\frac{s+\alpha }{s+\beta }}\right) ;t\right],
\end{equation}
where, for convenience, we set
\begin{equation}
	\frac{1}{\tau _{\sigma}} = \alpha >\beta =\frac{1}{\tau_{\epsilon}} > 0, \qquad
	\chi =\frac{x}{c} > 0.  \label{parameters_conversion}
\end{equation}%
After performing the following algebraic manipulation,
\begin{eqnarray*}
	-\chi s\sqrt{\frac{s+\alpha }{s+\beta }} &=&-\chi \left( s+\beta -\beta
	\right) \sqrt{\frac{s+\alpha +\beta -\beta }{s+\beta }} \\
	&=&-\chi \left( s+\beta \right) \sqrt{1+\frac{\alpha -\beta }{s+\beta }}
	+\chi \beta \sqrt{1+\frac{\alpha -\beta }{s+\beta}},
\end{eqnarray*}%
and according to the translation property of the inverse Laplace transform \cite[Eqn.~1.1.1(2)]{Prudnikov5},
\begin{equation}
	\mathcal{L}^{-1}\left[ F\left( s+a\right) ;t\right] =e^{-at}\mathcal{L}^{-1}
	\left[ F\left( s\right) ;t\right],  \label{Translation_property}
\end{equation}%
we have%
\begin{eqnarray}
	&&r_{\delta }\left( x,t\right)  \label{r_delta_L-1} \\
	&=&e^{-\beta t}\,\mathcal{L}^{-1}\left[ \mathrm{exp}\left( -\chi \sqrt{s^{2}+\left( \alpha -\beta \right) s}\right) \mathrm{exp}\left( \chi \beta \sqrt{1+\frac{\alpha -\beta }{s}}\right) ;t\right].  \notag
\end{eqnarray}

Now, defining the following functions:
\begin{eqnarray}
	u\left( a,b,t\right) &=&\mathcal{L}^{-1}\left[ \mathrm{exp}\left( -a\sqrt{s^{2}+bs}\right) ;t\right],  \label{u(a,b,t)_def} \\
	v\left( a,b,t\right) &=&\mathcal{L}^{-1}\left[ \mathrm{exp}\left( -a\sqrt{1+\frac{b}{s}}\right) ;t\right],  \label{v(a,b,t)_def}
\end{eqnarray}%
and applying the convolution theorem of the Laplace transform \cite[Eqn.~1.1.1(25)]{Prudnikov5},
\begin{equation*}
	\mathcal{L}^{-1}\left[ \mathcal{L}\left[ f\left( t\right) ;s\right]\,%
	\mathcal{L}\left[ g\left( t\right) ;s\right] ;t\right] =\int_{0}^{t}f\left(
	\tau \right) \,g\left( t-\tau \right) \,d\tau,
\end{equation*}%
we obtain
\begin{equation}
	r_{\delta }\left( x,t\right) =e^{-\beta t}\int_{0}^{t}u\left( \chi ,\alpha - \beta ,\tau \right) \,\,v\left( -\chi \beta ,\alpha -\beta ,t-\tau \right)\,d\tau.  \label{r(x,t)_1}
\end{equation}%
Next, we calculate the inverse Laplace transforms given in \eqref{u(a,b,t)_def}--\eqref{v(a,b,t)_def}.

\begin{theorem}
	For $a>0$, the following inverse Laplace transform formula holds true:
	\begin{eqnarray}
		&&u\left( a,b,t\right) =\mathcal{L}^{-1}\left[ \mathrm{exp}\left( -a\sqrt{%
			s^{2}+bs}\right) ;t\right]  \label{u(a,b,t)_resultado} \\
		&=&\mathrm{exp}\left( -\frac{bt}{2}\right) \left[ ab\,\frac{I_{1}\left( \frac{b%
			}{2}\sqrt{t^{2}-a^{2}}\right) }{2\sqrt{t^{2}-a^{2}}}\theta \left( t-a\right)
		+I_{0}\left( \frac{b}{2}\sqrt{t^{2}-a^{2}}\right) \delta \left( t-a\right)
		\right].  \notag
	\end{eqnarray}
\end{theorem}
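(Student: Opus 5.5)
The plan is to reduce the transform to a classical Laplace pair by completing the square and shifting. Write
\[
s^{2}+bs=\left(s+\tfrac{b}{2}\right)^{2}-\tfrac{b^{2}}{4}.
\]
Then $\exp(-a\sqrt{s^2+bs})=F(s+b/2)$ with $F(p)=\exp\bigl(-a\sqrt{p^{2}-k^{2}}\bigr)$ and $k=b/2$. By the translation property \eqref{Translation_property},
\[
u(a,b,t)=e^{-bt/2}\,\mathcal{L}^{-1}[F(p);t].
\]
This accounts for the prefactor $\exp(-bt/2)$ in \eqref{u(a,b,t)_resultado}. It therefore suffices to prove
\[
\mathcal{L}^{-1}\left[e^{-a\sqrt{p^{2}-k^{2}}};t\right]=\delta(t-a)+\frac{ak\,I_{1}\bigl(k\sqrt{t^{2}-a^{2}}\bigr)}{\sqrt{t^{2}-a^{2}}}\,\theta(t-a).
\]
This is consistent with the claim, since $ak/\sqrt{\cdot}=ab/(2\sqrt{\cdot})$. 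Also $I_0(0)=1$, so the $I_0$ factor multiplying $\delta(t-a)$ is just $1$ at $t=a$; I will note this explicitly so the stated form matches.

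To prove the reduced formula, I would split
\[
e^{-a\sqrt{p^2-k^2}}=e^{-ap}+\Bigl(e^{-a\sqrt{p^2-k^2}}-e^{-ap}\Bigr).
\]
The first term inverts to $\delta(t-a)$. For the second term I would use the tabulated pair (Prudnikov vol.~5, or Erd\'elyi)
\[
\mathcal{L}\left[\frac{I_1\bigl(k\sqrt{t^2-a^2}\bigr)}{\sqrt{t^2-a^2}}\,\theta(t-a);p\right]=\frac{e^{-a\sqrt{p^2-k^2}}-e^{-ap}}{ak}.
\]
If I want a self-contained argument instead of citing the table, I would start from the standard pair
\[
\mathcal{L}\bigl[I_0(k\sqrt{t^2-a^2})\,\theta(t-a)\bigr]=\frac{e^{-a\sqrt{p^2-k^2}}}{\sqrt{p^2-k^2}}.
\]
This pair itself follows by expanding $I_0$ as a power series in $k^2(t^2-a^2)$, or by analytic continuation $k\to ik$ of the known $J_0$ pair. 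I would then differentiate both sides with respect to $a$.

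The main obstacle is carrying out that differentiation correctly. Differentiating the integrand gives the regular $I_1$ term, using $I_0'=I_1$ and $\partial_a\sqrt{t^2-a^2}=-a/\sqrt{t^2-a^2}$. Differentiating the lower limit $t=a$ gives the boundary contribution $-e^{-ap}I_0(0)$. On the transform side, differentiating gives $-e^{-a\sqrt{p^2-k^2}}$. Rearranging these pieces yields exactly the displayed pair, with the boundary term producing the $e^{-ap}$, i.e.\ the delta at $t=a$.

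Finally, I would justify the interchange of the derivative and the integral. For $\operatorname{Re}p>|k|$ the integrals converge absolutely and uniformly for $a$ in compact sets, which suffices. Multiplying back by $e^{-bt/2}$ then gives \eqref{u(a,b,t)_resultado}. Since $a>0$, the point mass sits strictly inside $t>0$, so the distributional inverse is legitimate.
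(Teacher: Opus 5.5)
Your proposal is correct and follows essentially the same route as the paper: complete the square, use the translation property to extract $e^{-bt/2}$, then differentiate the tabulated pair $\mathcal{L}[I_0(k\sqrt{t^2-a^2})\,\theta(t-a)] = e^{-a\sqrt{p^2-k^2}}/\sqrt{p^2-k^2}$ with respect to $a$. The only cosmetic difference is how the delta arises: the paper differentiates $I_0(\cdot)\,\theta(t-\lambda)$ distributionally, so $-\partial_\lambda\theta$ gives it directly, whereas you obtain it as the Leibniz boundary term $e^{-ap}I_0(0)$ on the Laplace-integral side.
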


\begin{proof}
	We rewrite the function $u\left( a,b,t\right) $ as
	\begin{eqnarray*}
		u\left( a,b,t\right) &=&\mathcal{L}^{-1}\left[ \mathrm{exp}\left( -a\sqrt{
			\left( s+\frac{b}{2}\right) ^{2}-\left( \frac{b}{2}\right) ^{2}}\right);t\right] \\
		&=&\mathrm{exp}\left( -\frac{bt}{2}\right) \,\mathcal{L}^{-1}\left[ \mathrm{%
			exp}\left( -a\sqrt{s^{2}-\left( \frac{b}{2}\right) ^{2}}\right) ;t\right]
	\end{eqnarray*}%
	According to \cite[Eqn.~2.2.5(8)]{Prudnikov5}, we have, for $\lambda>0$,
	\begin{equation*}
		\mathcal{L}^{-1}\left[ \frac{\mathrm{exp}\left( -\lambda \sqrt{s^{2}-\xi ^{2}%
			}\right) }{\sqrt{s^{2}-\xi ^{2}}};t\right] =I_{0}\left( \xi \sqrt{%
			t^{2}-\lambda ^{2}}\right) \,\theta \left( t-\lambda \right),
	\end{equation*}%
	thus, performing the derivative with respect to $\lambda$,
	\begin{eqnarray}
		&&\mathcal{L}^{-1}\left[ \mathrm{exp}\left( -\lambda \sqrt{s^{2}-\xi ^{2}}%
		\right) ;t\right]  \notag \\
		&=&-\frac{d}{d\lambda }\left[ I_{0}\left( \xi \sqrt{t^{2}-\lambda ^{2}}%
		\right) \,\theta \left( t-\lambda \right) \right]  \notag \\
		&=&\lambda \xi \frac{I_{1}\left( \xi \sqrt{t^{2}-\lambda ^{2}}\right) }{%
			\sqrt{t^{2}-\lambda ^{2}}}\,\theta \left( t-\lambda \right) +I_{0}\left( \xi
		\sqrt{t^{2}-\lambda ^{2}}\right) \,\delta \left( t-\lambda \right).
		\label{resultado_1}
	\end{eqnarray}%
	Finally, we apply \eqref{resultado_1} with $\xi =\frac{b}{2}$ and $\lambda =a$ to complete the proof.
\end{proof}

\begin{theorem}
	For $b>0$, the following inverse Laplace transform formula holds true:
	\begin{eqnarray}
		v\left( a,b,t\right) &=&\mathcal{L}^{-1}\left[ \mathrm{exp}\left( -a\sqrt{1+%
			\frac{b}{s}}\right) ;t\right]  \label{v(a,b,t)_resultado} \\
		&=&e^{-a}\delta \left( t\right) -\frac{a^{3/2}b}{\sqrt{2\pi }}%
		\sum_{n=0}^{\infty }\left( -1\right) ^{n}\left( \frac{a\,b\,t}{2}\right) ^{n}%
		\frac{K_{n+1/2}\left( a\right) }{n!\,\left( n+1\right) !}.  \notag
	\end{eqnarray}
\end{theorem}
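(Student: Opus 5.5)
Write $F(s)=\exp\!\left(-a\sqrt{1+b/s}\right)$. The plan is to split off the value of $F$ at infinity, which produces the delta term, and then invert the remainder term by term.

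\textbf{Splitting off the delta.} As $s\to\infty$ we have $F(s)\to e^{-a}$, and $\mathcal{L}^{-1}[e^{-a};t]=e^{-a}\delta(t)$. It therefore remains to invert $F(s)-e^{-a}$, which is $O(1/s)$ at infinity.

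\textbf{Expanding in powers of $z=b/s$.} For $|z|<1$, i.e.\ $|s|>b$, set
\begin{equation*}
	g(z)=\exp\!\left(-a\sqrt{1+z}\right)=\sum_{m\ge 0} g_m z^m ,\qquad g_0=e^{-a}.
\end{equation*}
Then $F(s)-e^{-a}=\sum_{m\ge1} g_m b^m s^{-m}$ for $|s|>b$. Since $\mathcal{L}^{-1}[s^{-m};t]=t^{m-1}/(m-1)!$, termwise inversion gives
\begin{equation*}
	\sum_{n\ge0} g_{n+1}\,b^{n+1}\,\frac{t^n}{n!}.
\end{equation*}
Termwise inversion of a series convergent outside a disc is justified by the standard Laplace-transform theorem (a Watson-type lemma, or Doetsch's theorem for series in $1/s$). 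Because $g$ is analytic in $|z|<1$, the coefficients grow at most like $R^{m}$ with $R<1$ arbitrarily close to $1$ up to constants. The resulting series is then entire of exponential type in $t$, and the inversion is legitimate.

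\textbf{Identifying the coefficients $g_{n+1}$.} This is the main obstacle: one must show
\begin{equation*}
	g_{n+1}\,b^{n+1} = -\frac{a^{3/2} b}{\sqrt{2\pi}}\,(-1)^n\left(\frac{ab}{2}\right)^{n}\frac{K_{n+1/2}(a)}{(n+1)!},
\end{equation*}
that is, $g_{m}=(-1)^{m}\,\dfrac{a^{m+1/2}}{2^{m-1}\sqrt{2\pi}\,m!}\,K_{m-1/2}(a)$ for $m\ge1$.

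The route I would take is the Taylor expansion of $\phi(w)=e^{-a\sqrt{w}}$ about $w=1$, so that $g_m=\phi^{(m)}(1)/m!$. I would use the half-integer Macdonald representation
\begin{equation*}
	K_{\nu}(a)=\sqrt{\frac{\pi}{2a}}\,e^{-a}\sum_{k}\cdots \quad (\nu=n+\tfrac12),
\end{equation*}
together with the known formula $\left(\frac{1}{x}\frac{d}{dx}\right)^{m}\!\left[e^{-x}\right]=(-1)^m\sqrt{2/\pi}\,x^{-m-1/2}K_{m-1/2}(x)$. This follows from $K_{-1/2}(x)=\sqrt{\pi/(2x)}\,e^{-x}$ and the recurrence $\left(\frac{1}{x}\frac{d}{dx}\right)\!\left[x^{-\nu}K_\nu(x)\right]=-x^{-\nu-1}K_{\nu+1}(x)$.

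Setting $x=a\sqrt{w}$ gives $\frac{d}{dw}=\frac{a^2}{2}\,\frac{1}{x}\frac{d}{dx}$. Hence
\begin{equation*}
	\phi^{(m)}(w)=\left(\frac{a^2}{2}\right)^{m}(-1)^m\sqrt{2/\pi}\,x^{-m-1/2}K_{m-1/2}(x).
\end{equation*}
Evaluating at $w=1$ (so $x=a$) yields
\begin{equation*}
	g_m=\frac{(-1)^m a^{m-1/2}\sqrt{2/\pi}\,K_{m-1/2}(a)}{2^m\,m!}.
\end{equation*}

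\textbf{Matching with the statement.} Put $m=n+1$. Then
\begin{equation*}
	g_{n+1}b^{n+1}\frac{t^n}{n!}=-(-1)^n\frac{a^{n+1/2}\,b^{n+1}\,t^n}{2^{n}\sqrt{2\pi}\,n!\,(n+1)!}\,K_{n+1/2}(a),
\end{equation*}
using $\sqrt{2/\pi}/2^{n+1}=1/(2^n\sqrt{2\pi})$. Since $a^{n+1/2}b^{n+1}t^n/2^n=a^{1/2}b\,(abt/2)^n$, this has the same form as the statement except for the power of $a$ in front: the computation gives $a^{1/2}$, while the statement has $a^{3/2}$.

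So the decisive check is this final bookkeeping of powers of $a$. In the written proof I would recompute the derivative identity carefully, in particular the factor $x^{-m-1/2}$ versus $x^{-m+1/2}$. One would expect this to reconcile the prefactor as $a^{3/2}$ with $K_{n+1/2}$, since the paper's normalization is stated with $a^{3/2}$. The check at $n=0$ settles it: $g_1=\phi'(1)=-\frac{a}{2}e^{-a}$, and $K_{1/2}(a)=\sqrt{\pi/(2a)}\,e^{-a}$. The statement's term is then $-\frac{a^{3/2}b}{\sqrt{2\pi}}\sqrt{\frac{\pi}{2a}}\,e^{-a}=-\frac{ab}{2}e^{-a}$, which matches $g_1b$. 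So the correct identity must carry the extra factor of $a$, and the derivative formula should be adjusted accordingly.
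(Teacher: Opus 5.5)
Your approach is valid and differs from the paper's, but the key step is misstated. As written, the proposal therefore does not prove the formula.

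\textbf{The error.} The identity $\left(\frac{1}{x}\frac{d}{dx}\right)^{m}e^{-x}=(-1)^m\sqrt{2/\pi}\,x^{-m-1/2}K_{m-1/2}(x)$ is already false at $m=0$, where its right-hand side equals $e^{-x}/x$. Since $K_{-1/2}(x)=\sqrt{\pi/(2x)}\,e^{-x}$, you have $e^{-x}=\sqrt{2/\pi}\,x^{-\nu}K_\nu(x)$ with $\nu=-\tfrac12$. So the starting factor is $x^{+1/2}$, and $m$ applications of your (correct) recurrence give
\begin{equation*}
\left(\frac{1}{x}\frac{d}{dx}\right)^{m}e^{-x}=(-1)^{m}\sqrt{\frac{2}{\pi}}\,x^{1/2-m}K_{m-1/2}(x).
\end{equation*}
With $d/dw=\frac{a^2}{2}\,\frac{1}{x}\frac{d}{dx}$ and $x=a$ at $w=1$, this gives $\phi^{(m)}(1)=(-1)^m2^{-m}\sqrt{2/\pi}\,a^{m+1/2}K_{m-1/2}(a)$. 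That is exactly your target $g_m$, and it produces the prefactor $a^{3/2}$.

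You located the discrepancy correctly. However, the $n=0$ check only shows that your general formula is wrong; it cannot replace the identity for all $m$. This correction has to be part of the proof.

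\textbf{A smaller slip.} The coefficients $g_m$ cannot decay geometrically, because the radius of convergence is exactly $1$ (branch point at $z=-1$). What you need, and what holds, is $|g_m|\le C_\varepsilon(1+\varepsilon)^m$ for every $\varepsilon>0$. This bounds the $t$-series by $C'e^{b(1+\varepsilon)t}$ and justifies termwise transformation for $\Re s>b(1+\varepsilon)$.

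\textbf{Comparison with the paper.} Once repaired, your argument is a genuinely different proof.
\begin{itemize}
\item The paper inverts $F(b/s)$ via the operational rule with kernel $\sqrt{b/t}\,\sqrt{x}\,J_{1}(2\sqrt{btx})$. It applies this to the explicit inverse $\frac{a}{2\sqrt{\pi}}x^{-3/2}e^{-x-a^2/(4x)}$ of $e^{-a\sqrt{1+s}}$, expands $J_1$, and integrates termwise. It then recognizes $\int_0^\infty x^{k-1/2}e^{-x-a^2/(4x)}\,dx=2(a/2)^{k+1/2}K_{k+1/2}(a)$ from the integral representation of the Macdonald function.
\item You expand directly in powers of $b/s$ and invert termwise. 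The half-integer Macdonald functions enter only through the Rayleigh-type derivative formula.
\end{itemize}
Your route is more elementary. Since each $g_m$ is $e^{-a}$ times a polynomial in $a$, it is also valid for every real $a$, reading $a^{3/2}K_{n+1/2}(a)$ in the elementary form of the Remark.

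This matters here: the paper uses $v$ at $a=-\chi\beta<0$. The ingredients of its proof (the inverse of $e^{-a\sqrt{s}}$ and the integral representation with $|\arg z|<\pi/4$) hold only for $a>0$, so that case would need an analytic continuation in $a$. The paper's route, in turn, yields the $J_1$-kernel integral representation of $v_1$ as a by-product.
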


\begin{proof}
	First, note that
	\begin{equation*}
		\lim_{\left\vert s\right\vert \rightarrow +\infty }\mathrm{exp}\left( -a%
		\sqrt{1+\frac{b}{s}}\right) =e^{-a}\neq 0,
	\end{equation*}%
	thus, according to \cite[Chap.~6, Theorem~2]{Churchill}, $v\left(a,b,t\right)$ cannot be a regular function. However, we can decompose $v\left( a,b,t\right)$ into two terms, one regular and the other irregular, as follows:
	\begin{equation*}
		v\left( a,b,t\right) =\mathcal{L}^{-1}\left[ e^{-a}+\mathrm{exp}\left( -a\sqrt{1+\frac{b}{s}}\right) -e^{-a};t\right].
	\end{equation*}%
	Since \cite[Eqn.~2.37]{Schiff}
	\begin{equation}
		\mathcal{L}^{-1}\left[ 1;t\right] =\delta \left( t\right) ,  \label{L[1]}
	\end{equation}%
	we have
	\begin{equation*}
		v\left( a,b,t\right) =e^{-a}\delta \left( t\right) +\underset{v_{1}\left(
			a,b,t\right) }{\underbrace{\mathcal{L}^{-1}\left[ \mathrm{exp}\left( -a\sqrt{%
					1+\frac{b}{s}}\right) -e^{-a};t\right]}},
	\end{equation*}%
	where now $v_{1}\left( a,b,t\right)$ is regular, since
	\begin{equation*}
		\lim_{\left\vert s\right\vert \rightarrow +\infty }\mathrm{exp}\left( -a%
		\sqrt{1+\frac{b}{s}}\right) -e^{-a}=0.
	\end{equation*}%
	Applying the inverse Laplace transform formula \cite[Eqn.~1.1.1(30)]{Prudnikov5} for $b>0$ and $\Re \left( \nu \right) > -1$, we obtain:
	\begin{eqnarray*}
		&&\mathcal{L}^{-1}\left[ s^{\nu }F\left( \frac{b}{s}\right) ;t\right] \\
		&=&\left( \frac{b}{t}\right) ^{\left( \nu +1\right) /2}\int_{0}^{\infty
		}x^{\left( \nu +1\right) /2}\,J_{-\nu -1}\left( 2\sqrt{b\,t\,x}\right) \,
		\mathcal{L}^{-1}\left[ F\left( s\right) ;x\right] \,dx,
	\end{eqnarray*}%
	Using the property \cite[Eqn.~5.3.3]{Lebedev}
	\begin{equation*}
		J_{-n}\left( z\right) =\left( -1\right) ^{n}J_{n}\left( z\right), \quad
		n=1,2,\ldots
	\end{equation*}
	and taking into account \eqref{L[1]}, we obtain
	\begin{eqnarray}
		&&v_{1}\left( a,b,t\right) \\
		&=&-\sqrt{\frac{b}{t}}\int_{0}^{\infty }\sqrt{x}\,J_{1}\left( 2\sqrt{b\,t\,x}%
		\right) \,\mathcal{L}^{-1}\left[ \mathrm{exp}\left( -a\sqrt{1+s}\right) ;x%
		\right] \,dx  \label{v(a,b,t)_1} \\
		&&+e^{-a}\underset{=0}{\sqrt{\frac{b}{t}}\underbrace{\int_{0}^{\infty }\sqrt{%
					x}\,J_{1}\left( 2\sqrt{b\,t\,x}\right) \,\delta \left( x\right) \,dx}}.
		\notag
	\end{eqnarray}
	According to \eqref{Translation_property} and the inverse Laplace formula \cite[Eqn.~2.2.1(9)]{Prudnikov5}, we have:
	\begin{equation*}
		\mathcal{L}^{-1}\left[ \mathrm{exp}\left( -a\sqrt{s}\right) ;t\right] =\frac{%
			a}{2\sqrt{\pi }t^{3/2}}\mathrm{exp}\left( -\frac{a^{2}}{4t}\right) ,\quad
		\Re \left( a^{2}\right) > 0,
	\end{equation*}%
	thus,
	\begin{equation}
		\mathcal{L}^{-1}\left[ \mathrm{exp}\left( -a\sqrt{1+s}\right) ;x\right] =%
		\frac{a}{2\sqrt{\pi }x^{3/2}}\mathrm{exp}\left( -x-\frac{a^{2}}{4x}\right)
		.  \label{L-1[exp]}
	\end{equation}%
	Taking into account \eqref{L-1[exp]} and the expansion \cite[Eqn.~5.3.2]{Lebedev}, we rewrite \eqref{v(a,b,t)_1} as
	\begin{equation*}
		v_{1}\left( a,b,t\right) =-\frac{ab}{\sqrt{\pi }}\sum_{k=0}^{\infty }\frac{%
			\left( -1\right) ^{k}\,\left( 4b\,t\,\right) ^{k}}{k!\,\left( k+1\right) !}%
		\int_{0}^{\infty }\,x^{k-1/2}\mathrm{exp}\left( -x-\frac{a^{2}}{4x}\right)
		\,dx.
	\end{equation*}%
	Finally, we apply the following integral representation of the Macdonald function \cite[Eqn.~5.10.25]{Lebedev}%
	\begin{equation*}
		K_{\nu }\left( z\right) =\frac{1}{2}\left( \frac{z}{2}\right) ^{\nu
		}\int_{0}^{\infty }\tau ^{-\nu -1}\mathrm{exp}\left( -\tau -\frac{z^{2}}{%
			4\tau }\right) d\tau ,\quad \left\vert \arg z\right\vert <\frac{\pi }{4},
	\end{equation*}%
	as well as the property \cite[Eqn.~5.7.10]{Lebedev}
	\begin{equation*}
		K_{-\nu }\left( z\right) =K_{\nu }\left( z\right),
	\end{equation*}
	to complete the proof.
\end{proof}

\begin{remark}
	Note that the Macdonald function of half-integral order can be expressed in terms of elementary functions as follows \cite[Eqn.~8.440]{Gradshteyn}:
	\begin{equation*}
		K_{n+1/2}\left( z\right) =\sqrt{\frac{\pi }{2z}}e^{-z}\sum_{k=0}^{n}\frac{
			\left( n+k\right) !}{k!\left( n-k\right) !\left( 2z\right)^{k}}.
	\end{equation*}
\end{remark}

\begin{remark}
	Note that the function $v\left( a,b,t\right)$ is expressed in \eqref{v(a,b,t)_resultado} as an alternating series; thus, it can be computed very efficiently using the Cohen--Villegas--Zagier algorithm \cite{CohenVillegasZagier}.
\end{remark}

Now, rewrite \eqref{u(a,b,t)_resultado} and \eqref{v(a,b,t)_resultado} as
\begin{eqnarray}
	u\left( a,b,t\right) &=&u_{1}\left( a,b,t\right) \,\theta \left( t-a\right)
	+u_{2}\left( a,b,t\right) \,\delta \left( t-a\right) ,  \label{u_split} \\
	v\left( a,b,t\right) &=&v_{1}\left( a,b,t\right) +e^{-a}\delta \left(
	t\right),  \label{v_split}
\end{eqnarray}%
where
\begin{eqnarray}
	u_{1}\left( a,b,t\right) &=&ab\,\frac{I_{1}\left( \frac{b}{2}\sqrt{t^{2}-a^{2}}%
		\right) }{2\sqrt{t^{2}-a^{2}}}\mathrm{exp}\left( -\frac{bt}{2}\right),
	\label{u1_def} \\
	u_{2}\left( a,b,t\right) &=&I_{0}\left( \frac{b}{2}\sqrt{t^{2}-a^{2}}\right)
	\mathrm{exp}\left( -\frac{bt}{2}\right) ,  \label{u2_def} \\
	v_{1}\left( a,b,t\right) &=&-\frac{a^{3/2}b}{\sqrt{2\pi }}\sum_{n=0}^{\infty
	}\left( -1\right) ^{n}\left( \frac{a\,b\,t}{2}\right) ^{n}\frac{%
		K_{n+1/2}\left( a\right) }{n!\,\left( n+1\right) !}.  \label{v1_def}
\end{eqnarray}

\begin{remark}
	Note that $u_{1}\left( a,b,a\right)$ is an indeterminate expression. However, taking into account \cite[Eqn.~5.16.4]{Lebedev}, we obtain:
	\begin{equation*}
		I_{\nu }\left( x\right) \approx \frac{x^{\nu }}{2^{\nu }\,\Gamma \left(
			1+\nu \right) },\quad x\rightarrow 0,
	\end{equation*}
	which yields
	\begin{equation*}
		\lim_{t\rightarrow a}u_{1}\left( a,b,t\right) =\frac{ab^{2}}{8}\mathrm{exp}
		\left( -\frac{ab}{2}\right).
	\end{equation*}
\end{remark}

Insert \eqref{u_split}--\eqref{v_split} into \eqref{r(x,t)_1}, and take into account \cite[Sect.~50:2]{Atlas}:
\begin{equation}
	I_{0}\left( 0\right) =1,  \label{I(0)=0}
\end{equation}
to arrive at
\begin{eqnarray*} \label{eq:integral_representation}
	r_{\delta }\left( x,t\right) = &&e^{-\beta t}\,\theta \left( t-\chi \right) \\
	&&\left\{ \int_{\chi }^{t}u_{1}\left( \chi ,\alpha -\beta ,\tau \right)
	\,\,v_{1}\left( -\chi \beta ,\alpha -\beta ,t-\tau \right) \,d\tau \right. \\
	&&+\,\mathrm{exp}\left( \frac{\beta -\alpha }{2}\chi \right) \,v_{1}\left(
	-\chi \beta ,\alpha -\beta ,t-\chi \right) +e^{\chi \beta }u_{1}\left( \chi
	,\alpha -\beta ,t\right) \\
	&&+\left. \mathrm{exp}\left( \frac{3\beta -\alpha }{2}\chi \right)
	\,\,\delta \left( t-\chi \right) \right\}.
\end{eqnarray*}

When $\beta \rightarrow 0$, i.e., $\tau_{\epsilon} \rightarrow \infty$ according to \eqref{parameters_conversion}, we should recover the Maxwell model according to \eqref{eq:MAXWELL n(s) (4.52)}. Indeed, applying the asymptotic formula for $\nu >0$ \cite[Eqn.~5.16.4]{Lebedev}, we have:
\begin{equation*}
	K_{\nu }\left( x\right) \approx \frac{2^{\nu -1}}{x^{\nu }}\Gamma \left( \nu
	\right), \quad x\rightarrow 0,
\end{equation*}%
which, when applied to \eqref{v1_def}, yields
\begin{equation*}
	\lim_{\beta \rightarrow 0}\,v_{1}\left( -\chi \beta ,\alpha -\beta ,t-\chi
	\right) =0.
\end{equation*}
Thus,
\begin{equation*}
	\lim_{\beta \rightarrow 0}r_{\delta }\left( x,t\right) =\,\theta \left(
	t-\chi \right) \left\{ u_{1}\left( \chi ,\alpha ,t\right) +\mathrm{exp}%
	\left( -\frac{\alpha }{2}\chi \right) \,\,\delta \left( t-\chi \right)
	\right\}.
\end{equation*}%
Taking into account \eqref{u1_def}, we arrive at the known result for the Maxwell model \cite[Eqn.~5.71]{MainardiBook}:
\begin{eqnarray*}
	&&\lim_{\beta \rightarrow 0}r_{\delta }\left( x,t\right) \\
	&=&\mathrm{exp}\left( -\frac{\alpha t}{2}\right) \left[ \chi \alpha \, \frac{%
		I_{1}\left( \frac{\alpha }{2}\sqrt{t^{2}-\chi ^{2}}\right) }{2\sqrt{%
			t^{2}-\chi ^{2}}}\theta \left( t-\chi \right) +\delta \left( t-\chi \right)
	\right].
\end{eqnarray*}
\section{Delta-pulse solution via steepest descent method} \label{Section 5: SDP}

In order to compute the response $r_{\delta}$ given by Eq.~\eqref{r_delta=L-1[n(s)]}, we propose replacing the Bromwich path with an integral along the \textit{steepest descent path} (SDP) in the complex plane.

\subsection{The steepest descent method: general framework}

The steepest descent method (also known as the saddle-point method) \cite{Brillouin, MainardiSPD} is a powerful technique for evaluating contour integrals of the form
\begin{equation} \label{eq:Bromwich_general}
	r_\delta(x,t) = \frac{1}{2\pi i} \int_{Br} \exp\!\left(s\!\left[t -
	\frac{x}{c}\,n(s)\right]\right) ds,
\end{equation}
where $Br$ denotes the Bromwich path (a vertical line lying to the right of all singularities of the integrand), and $n(s)$ is the complex index of refraction of the medium. The idea is to deform the Bromwich path into a new contour $\gamma_\mu$?the steepest descent path (SDP)?passing through a saddle point of the phase function
\begin{equation*}
	F_\mu(s) = s \left[1 - \mu \, n(s)\right], \qquad
	\mu = \frac{x}{ct}, \qquad
	\left(0 \leq \mu \leq 1\right),
\end{equation*}
i.e., a point $p$ in the complex plane satisfying $dF_\mu/ds = 0$. The path $\gamma_\mu$ is chosen so that:
\begin{itemize}
	\item it passes through a saddle point $p$;
	\item the imaginary part of $F_\mu$ is constant along $\gamma_\mu$;
	\item the real part of $F_\mu$ attains its maximum on $\gamma_\mu$ at the saddle point $p$;
	\item the integral along $\gamma_\mu$ is equivalent to the original Bromwich integral, or differs from it by a finite residue contribution.
\end{itemize}

The key advantage of integrating along $\gamma_\mu$ rather than along the Bromwich path $Br$ is that the imaginary part of $F_\mu$ is constant on $\gamma_\mu$, which eliminates the rapid oscillations of the integrand that would otherwise severely hinder numerical evaluation. Since $\gamma_\mu$ encloses the branch cut of $n(s)$, the integration along $\gamma_\mu$ captures the full contribution of the branch cut to the wave response, and the exact result is recovered for any values of $x$ and $t$, not merely in the large-time asymptotic regime, in contrast to what is typical of classical techniques \cite{Brillouin, MainardiSPD}.

For viscoelastic media of type I (i.e., with finite wave-front velocity $c$), the relevant saddle point $p$ lies on the real axis and moves from $p \to +\infty$ at the wave front ($\mu = 1$) to a branch point of $n(s)$ as $\mu \to 0$. The SDP $\gamma_\mu$ through $p$ is a closed curve enclosing the branch cut of $n(s)$ along the negative real axis, and it can be characterized as the zero locus of the implicit equation
\begin{equation*}
	\imagpart{s - s \, \mu \, n(s)} = 0.
\end{equation*}

Once the path $\gamma_\mu$ has been determined?analytically when possible, or numerically otherwise?the Bromwich integral~\eqref{eq:Bromwich_general} is reduced to a real line integral that can be evaluated efficiently by standard quadrature. In this way, the steepest descent method provides an accurate numerical tool that is valid uniformly in the entire space-time domain, while retaining the structure of the saddle-point asymptotic approximation as a limiting case for large $x$ and $t$.

It is worth stressing that the SDP approach does not require an explicit closed-form inversion of the Laplace transform, which is in general unavailable. This makes it a particularly valuable tool in settings where, unlike the Maxwell and Zener models, no integral representation of the solution in terms of known special functions can be derived.

\subsection{Application of the steepest descent method to the Zener model equations}

Letting $F_{\mu}=F/t$, where $F$ is the argument of the exponential function in Eq.~\eqref{r_delta=L-1[n(s)]}:
\begin{equation} \label{eq:Fmu}
	F_{\mu}\left(s\right) = s\left(1 - \mu \sqrt{\frac{s + 1/\tau_{\sigma}}{s + 1/\tau_{\epsilon}}}\right),
	\qquad \mu = \frac{\chi}{c} = \frac{x}{ct}, \qquad \left( 0 \leq \mu \leq 1\right),
\end{equation}%
the SDP is a path through a saddle point of $F_{\mu}$ along which the imaginary part of $F_{\mu}$ is constant. The saddle points of $F_{\mu}$ are the points in the complex plane such that
\begin{equation} \label{eq:dFds}
	\frac{dF_{\mu }(s)}{ds} = 0,
\end{equation}
which, taking into account Eq.~\eqref{eq:Fmu}, reads
\begin{equation} \label{eq:intermediate}
	\mu \left( \frac{s + 1/\tau_{\sigma}}{s + 1/\tau_{\epsilon}}\right)^{1/2} \left(1 - \frac{1}{2\left(s\tau_{\sigma} + 1\right)} + \frac{1}{2\left(s\tau_{\epsilon} + 1\right)}\right) - 1 = 0.
\end{equation}%

Eq.~\eqref{eq:intermediate} has four roots, two of which are real ($p_1$, $p_2$), and two of which are complex conjugates ($p_3$, $p_4$), with imaginary parts vanishing as $\mu \rightarrow 0$ (see Figs.~\ref{Figure: saddle points p1 p2} and \ref{Figure: saddle points p3 p4}).

\begin{figure}
	\begin{center}
		\includegraphics[scale=0.4]{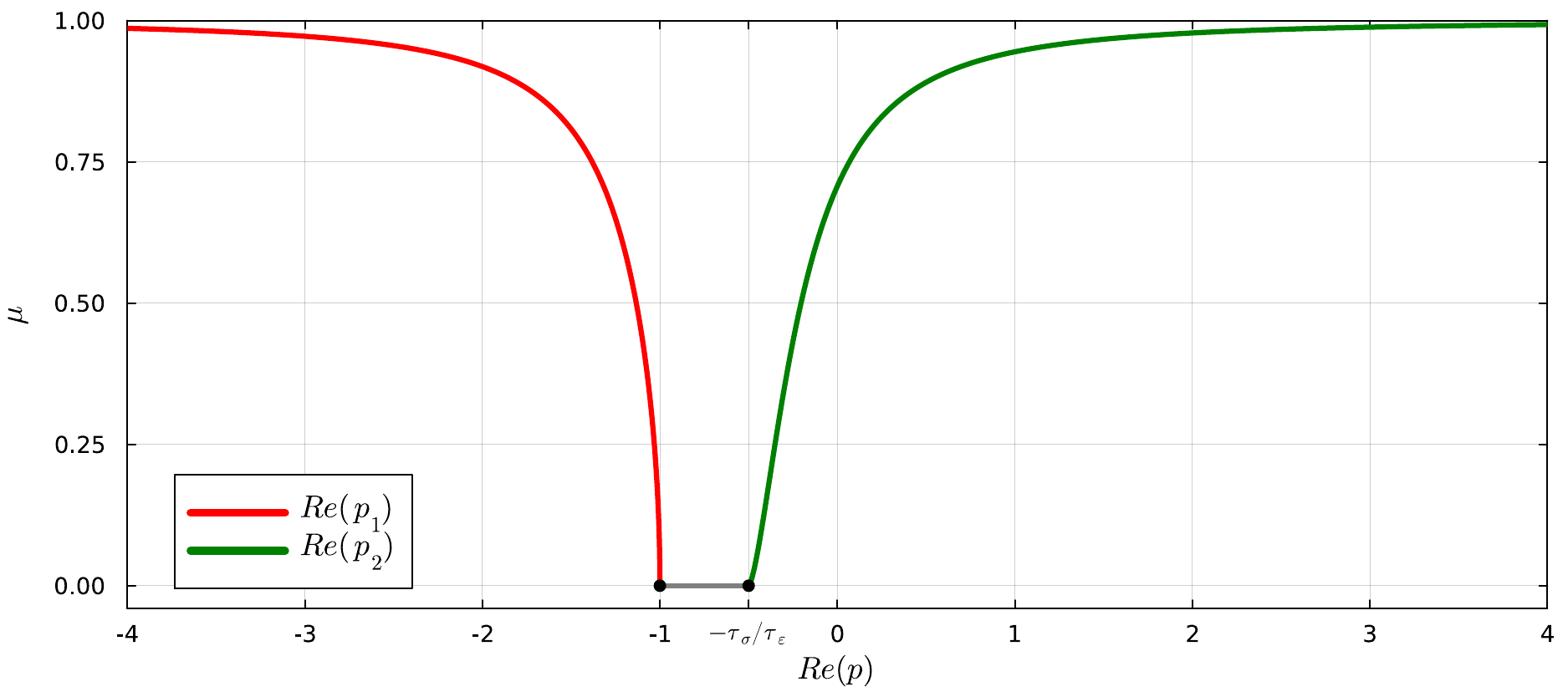}	
	\end{center}
	\caption{Real part of the saddle points $p_{1}$ and $p_{2}$ as functions of $\mu$ ($\text{Im}\left(p_1\right) = \text{Im}\left(p_2\right) = 0$). The black dots represent the branch points; the gray line represents the branch cut ($\tau_{\sigma} = 1$, $\tau_{\epsilon} = 2$). \label{Figure: saddle points p1 p2}}
\end{figure}

\begin{figure}
	\begin{center}
		\includegraphics[scale=0.4]{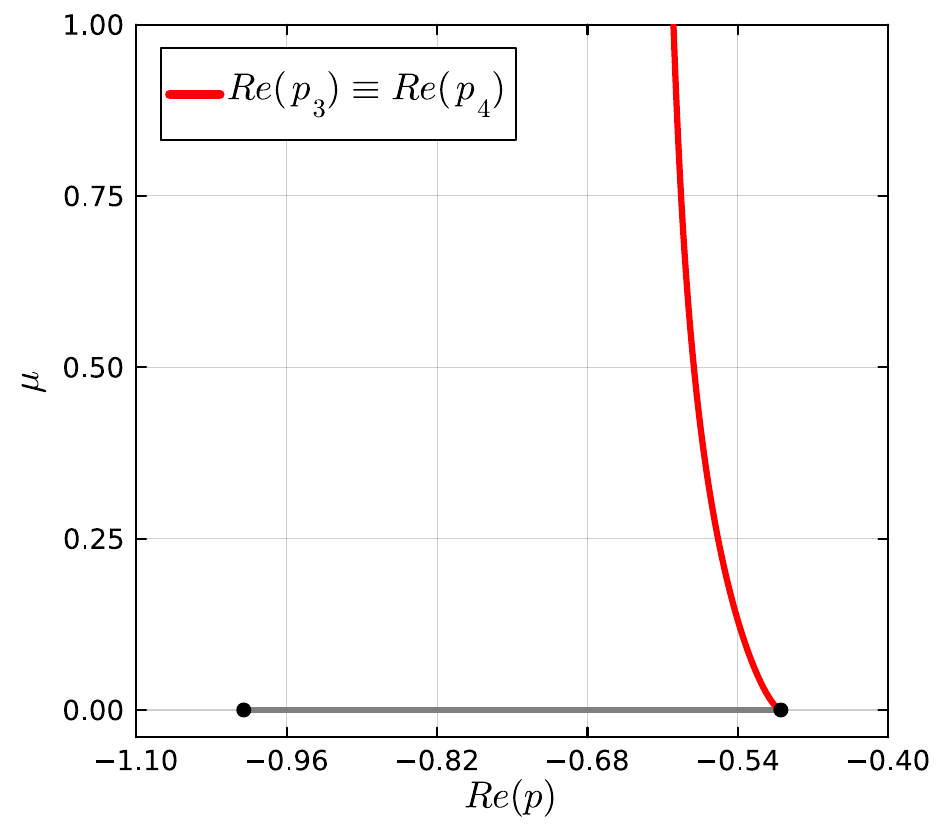}
		\includegraphics[scale=0.4]{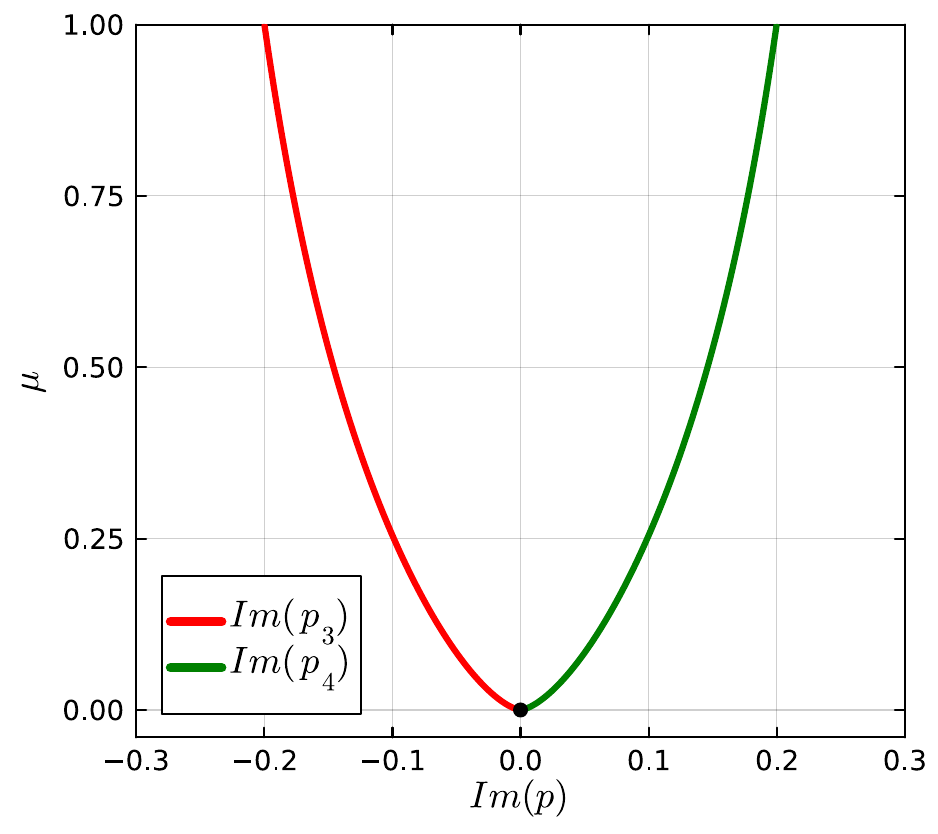}
	\end{center}
	\caption{Real part (left) and imaginary part (right) of the saddle points $p_3$ and $p_4$ as functions of $\mu$. The black dots represent the branch points; the gray line represents the branch cut ($\tau_{\sigma} = 1$, $\tau_{\epsilon} = 2$). \label{Figure: saddle points p3 p4}}
\end{figure}

As can be seen from Fig.~\ref{Figure: saddle points p1 p2} and Fig.~\ref{Figure: saddle points p3 p4}, the saddle points approach the branch points $b_1 = -1$ and $b_2 = -\tau_{\sigma}/\tau_{\epsilon}$ of $F_\mu$ as $\mu \to 0$.

\smallskip

Since $F_{\mu}\left( p_{1}\right) = F_{\mu}\left( p_{2}\right) = 0$, the steepest descent path $\gamma_{\mu}$ through the saddle points $p_{1}$ and $p_{2}$ is given by the locus of points such that:
\begin{equation} \label{eq:ImagF=0}
	\text{\textrm{Im}}\left(s - s \mu \sqrt{\frac{s + 1/\tau_{\sigma}}{s + 1/\tau_{\epsilon}}}\right) = 0.
\end{equation}%

Letting $s = \xi + \eta \, i$, Eq.~\eqref{eq:ImagF=0} allows one to express the steepest descent path as a one-parameter family of curves (with $\mu$ as the parameter), defined implicitly by
\begin{equation}
	\gamma_{\mu}\left( \xi ,\eta \right) = 0.  \label{eq:gammamu}
\end{equation}%

Setting $\Xi = \tau_{1}\xi$ and $\varphi = \tau_\sigma / \tau_\epsilon$, it can be shown that
\begin{equation}
	\gamma_{\mu}\left( \xi ,\eta \right) = C_{6}\, \eta^{6} + C_{4}\, \eta^{4} + C_{2}\, \eta^{2} + C_{0},  \label{eq:poly}
\end{equation}%
where $C_{k}\left( \xi ,\mu ,\varphi ,\tau_{\sigma}\right) = \frac{1}{4}\tau_{\sigma}^{k-6}c_{k}\left( \Xi ,\mu ,\varphi \right)$ and
\begin{eqnarray*}
	c_{0} &=& \Xi^{2}\mu^{2}\left\{ 4\varphi^{2}\left( \varphi - \mu^{2}\right) - 4\Xi^{3}\left( \mu^{2}-1\right)\left( \Xi + 3\varphi + 1\right) \right. \\
	&&+ \, 4\Xi \varphi \left[ \varphi \left( \varphi + 3\right) - \mu^{2}\left( 3\varphi + 1\right) \right] \\
	&&-\left. \Xi^{2}\left[ \mu^{2}\left( 1 + \varphi \left( 9\varphi + 14\right) \right) - 12\varphi \left( 1 + \varphi \right) \right] \right\}, \\
	c_{2} &=& 4\Xi^{4}\left( 1 + \mu^{2} - 2\mu^{4}\right) + 4\varphi^{3}\left( \varphi - \mu^{2}\right) - 8\Xi^{3}\left( \mu^{2}-1\right)\left[ 2\varphi + \mu^{2}\left( \varphi + 1\right) \right] \\
	&&+ \, 4\Xi \varphi \left[ \mu^{4}\left( \varphi - 1\right) + 4\varphi^{2} - \mu^{2}\varphi \left( 3\varphi + 1\right) \right] \\
	&&+ \, 2\Xi^{2}\left\{ 4\mu^{2}\varphi \left( 1 - 3\varphi \right) + 12\varphi^{2} + \mu^{4}\left[ 3\varphi \left( \varphi - 2\right) - 1\right] \right\}, \\
	c_{4} &=& 4\Xi \left\{ \left( \mu^{2}-1\right) \left[ \mu^{2}\left( \varphi - 1\right) - 4\varphi \right] - \Xi \left( \mu^{4} + \mu^{2} - 2\right) \right\} \\
	&&- \, \mu^{4}\left( \varphi - 1\right)^{2} + 8\varphi^{2} - 4\mu^{2}\varphi \left( \varphi + 1\right), \\
	c_{6} &=& 4\left( 1 - \mu^{2}\right).
\end{eqnarray*}%

Since Eq.~\eqref{eq:poly} is a third-order polynomial in the variable $\eta^{2}$, it follows that the SDP can be explicitly determined by two functions $\eta_{\mu ,\varphi ,\tau_{\sigma}}^{\pm}\left( \xi \right)$, where
\begin{equation*}
	\eta_{\mu ,\varphi ,\tau_{\sigma}}^{+}\left( \xi \right) =
	-\eta_{\mu ,\varphi ,\tau_{\sigma}}^{-}\left( \xi \right),
\end{equation*}
so that the SDP is symmetric with respect to the real axis.

Defining
\begin{equation*}
	f_{\delta}\left( x,t;s\right) = \frac{1}{2\pi i}\,
	\mathrm{exp}\left( s\left[t - \frac{x}{c}\,n\left( s\right) \right] \right),
\end{equation*}
and recalling that the real saddle points $p_1 = \xi_1$ and $p_2 = \xi_2$ belong to the SDP, with $\xi_1 < \xi_2$ (see Fig.~\ref{Figure: saddle points p1 p2}), the contour integral along the Bromwich path given in Eq.~\eqref{r_delta=L-1[n(s)]} can now be expressed along the SDP as follows:
\begin{equation} \label{eq:sdp_integral}
	\begin{split}
		&r_{\delta}\left( x,t\right) \\
		=& \int_{\gamma_{\mu}} f_{\delta}\left( x,t;s\right)\, ds \\
		=& \int_{\xi_1}^{\xi_2} \left[ f_{\delta}\left( x,t;\eta_{\mu ,\varphi ,\tau_{\sigma}}^{+}\left( \xi \right) \right) - f_{\delta}\left( x,t;\eta_{\mu ,\varphi ,\tau_{\sigma}}^{-}\left( \xi \right) \right) \right] \frac{d}{d\xi} \left[ \eta_{\mu ,\varphi ,\tau_{\sigma}}^{+}\left( \xi \right) \right] d\xi.
	\end{split}
\end{equation}

The resulting real-line integral may be evaluated numerically by means of a standard method, such as adaptive Gauss--Kronrod quadrature.

\smallskip

Figures~\ref{fig:case1}--\ref{fig:case3} show the delta-pulse response $r_\delta(x,t)$ for three values of $\tau_\epsilon$ (with $\tau_\sigma = 1$, $c = 1$), computed independently via the steepest descent integral~\eqref{eq:sdp_integral} and via the integral representation~\eqref{eq:integral_representation}. The two methods are in excellent agreement, providing mutual validation of both approaches.

The limiting case of the Maxwell model, obtained for $\tau_\epsilon \to \infty$, was shown (for $\tau_\sigma = 1$, $c = 1$) in \cite{MainardiMentrelliSantander}.

\begin{figure}
	\begin{center}
		\includegraphics[scale=0.4]{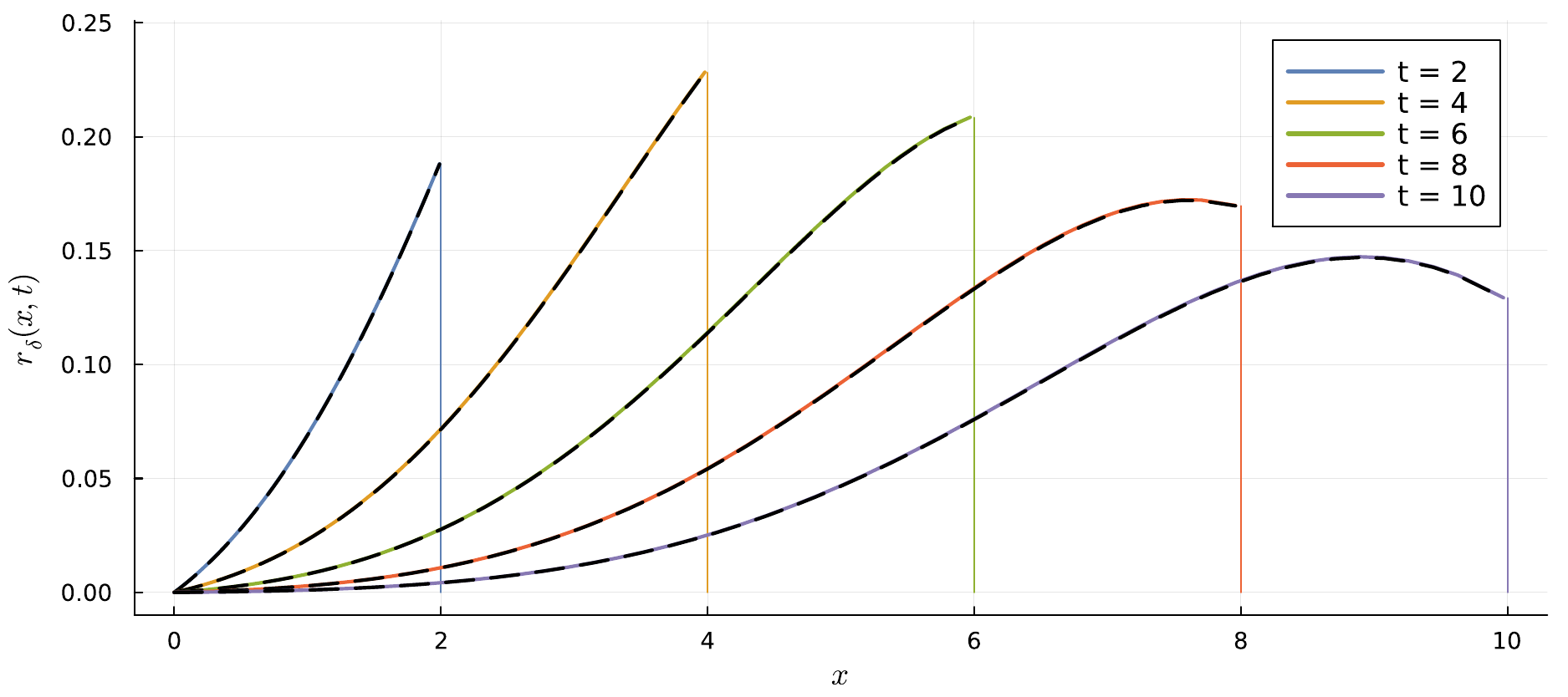}
		\includegraphics[scale=0.4]{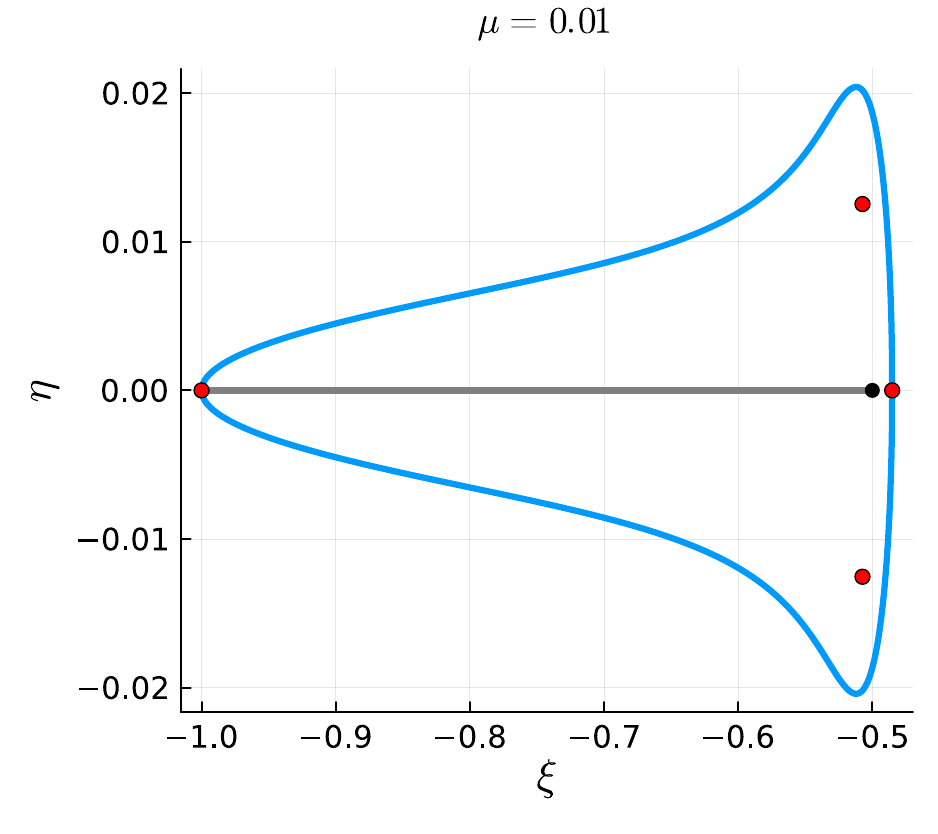} %
		\includegraphics[scale=0.4]{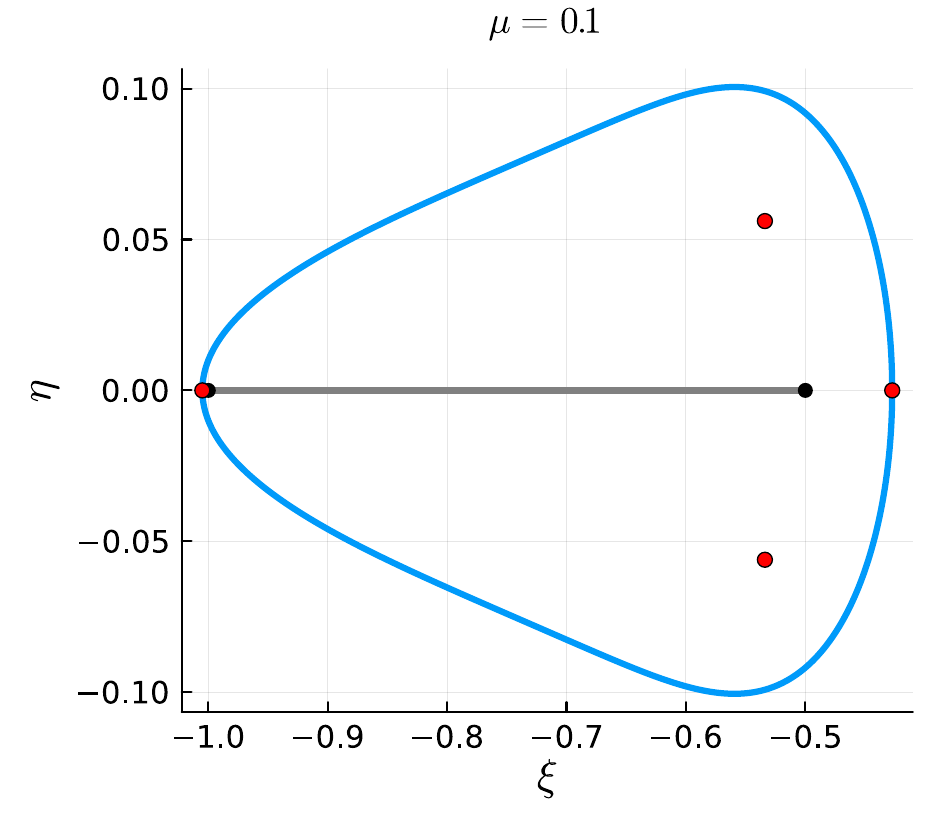} %
		\includegraphics[scale=0.4]{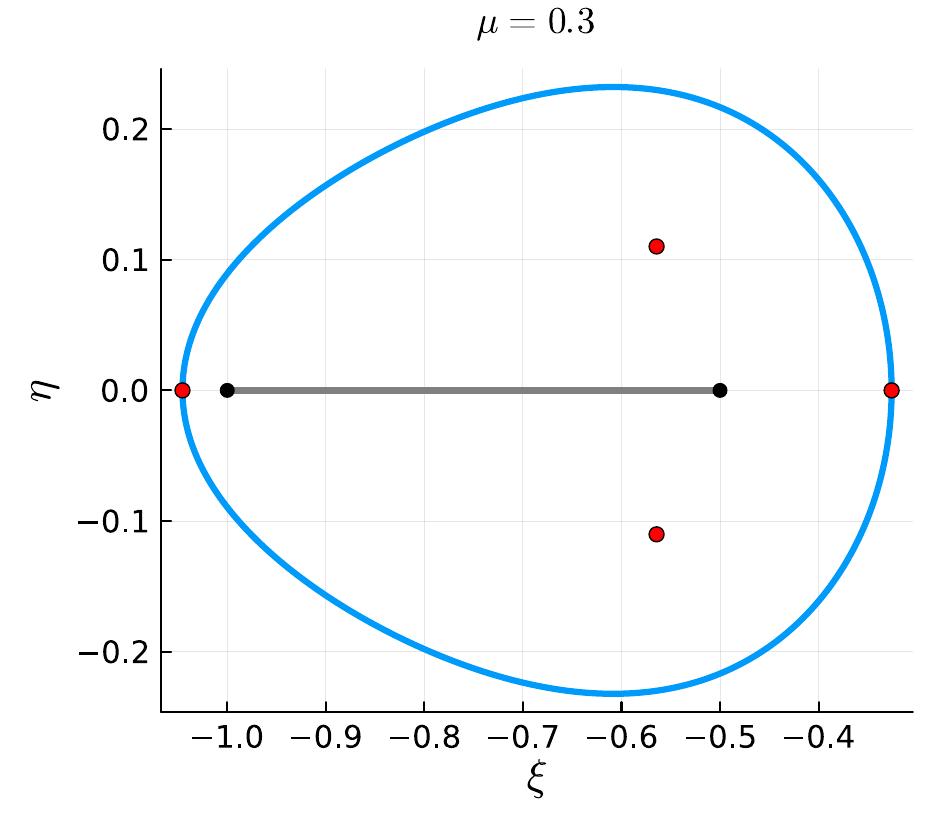} %
		\includegraphics[scale=0.4]{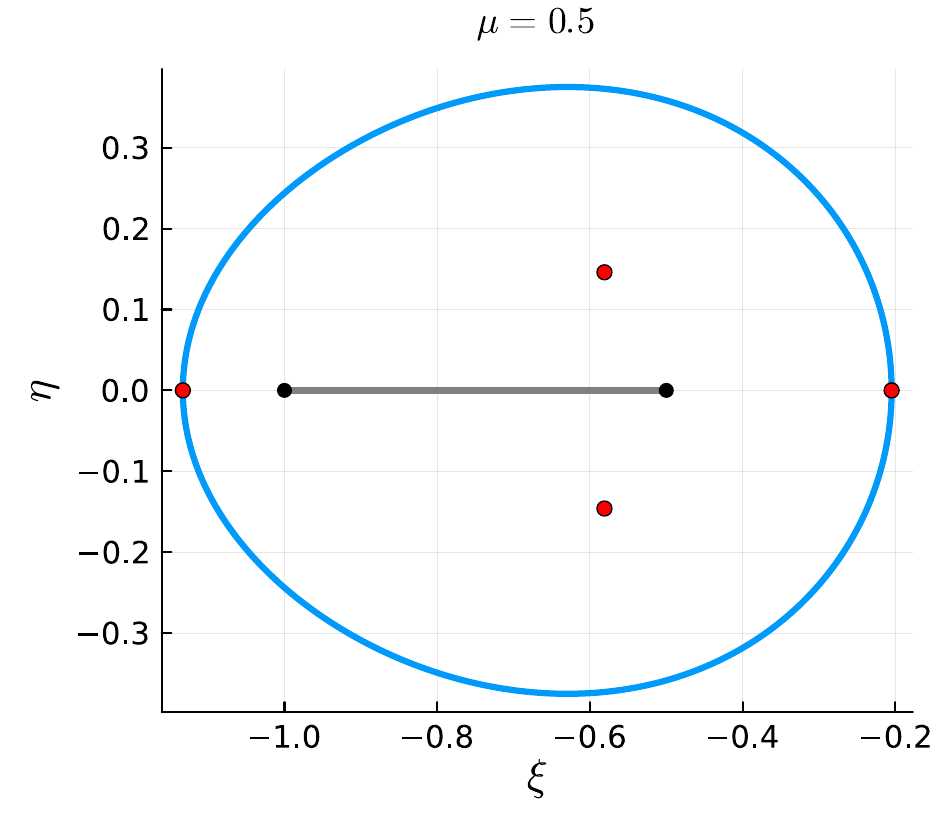} %
	\end{center}
	\caption{Top: Comparison between the delta-pulse response obtained via the integral representation (colored solid lines) and via the steepest descent method (black dashed lines) for several values of $t$ ($\tau_\sigma = 1$, $\tau_\epsilon = 2$, $c = 1$); bottom: steepest descent paths for various values of $\mu = x/ct = 0.01,\ 0.1, 0.3,\ 0.5$). \label{fig:case1}}
\end{figure}

\begin{figure}
	\begin{center}
		\includegraphics[scale=0.4]{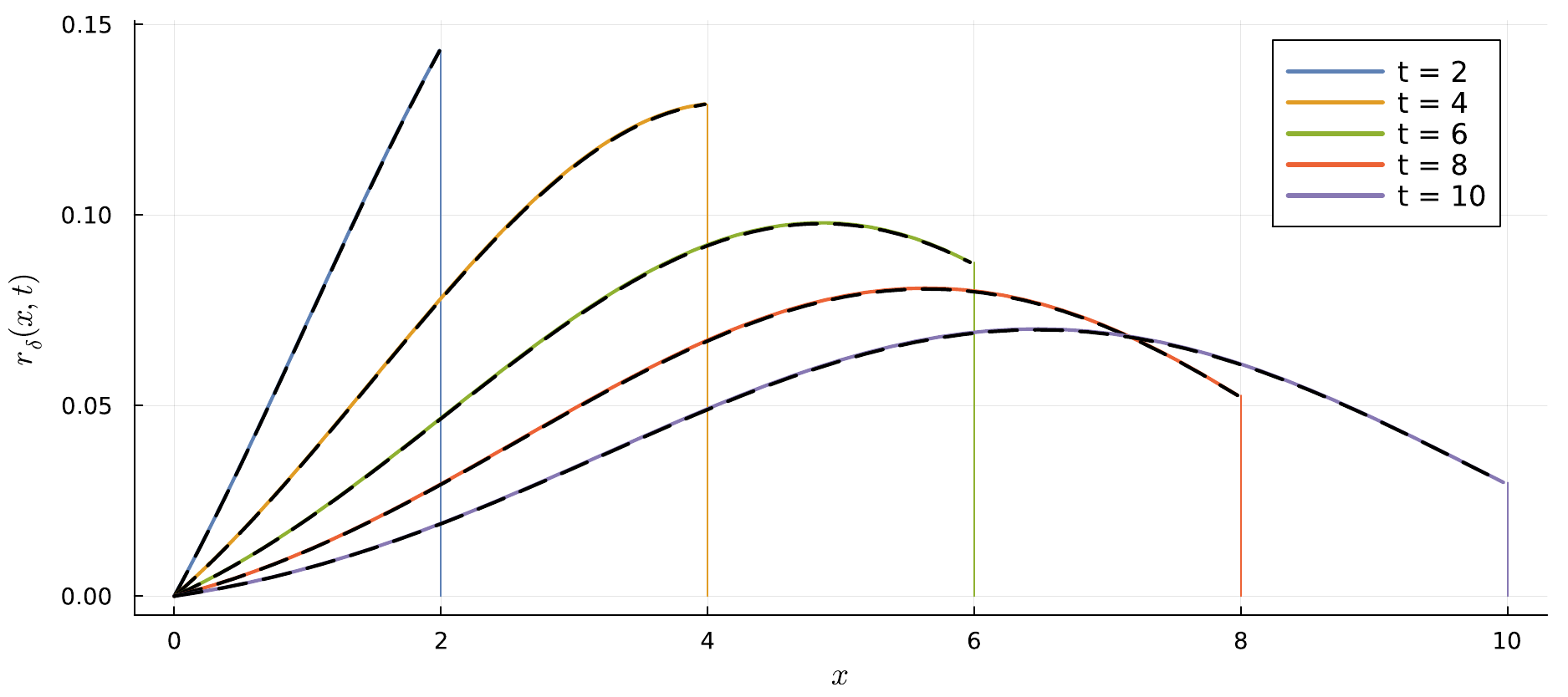}
		\includegraphics[scale=0.4]{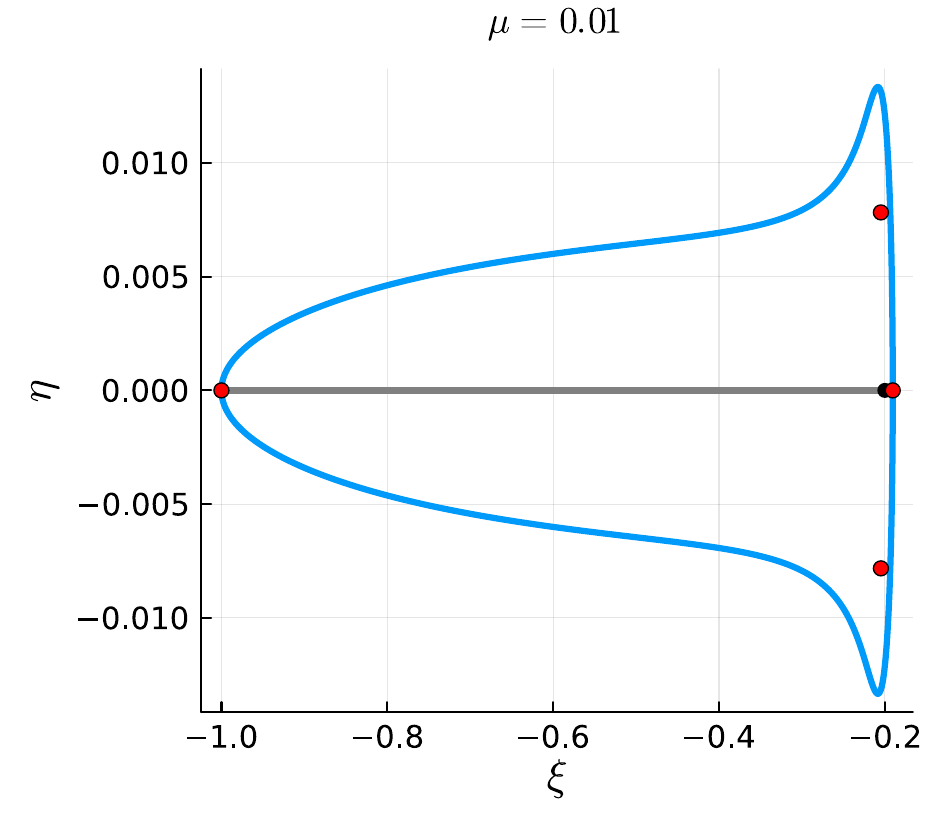} %
		\includegraphics[scale=0.4]{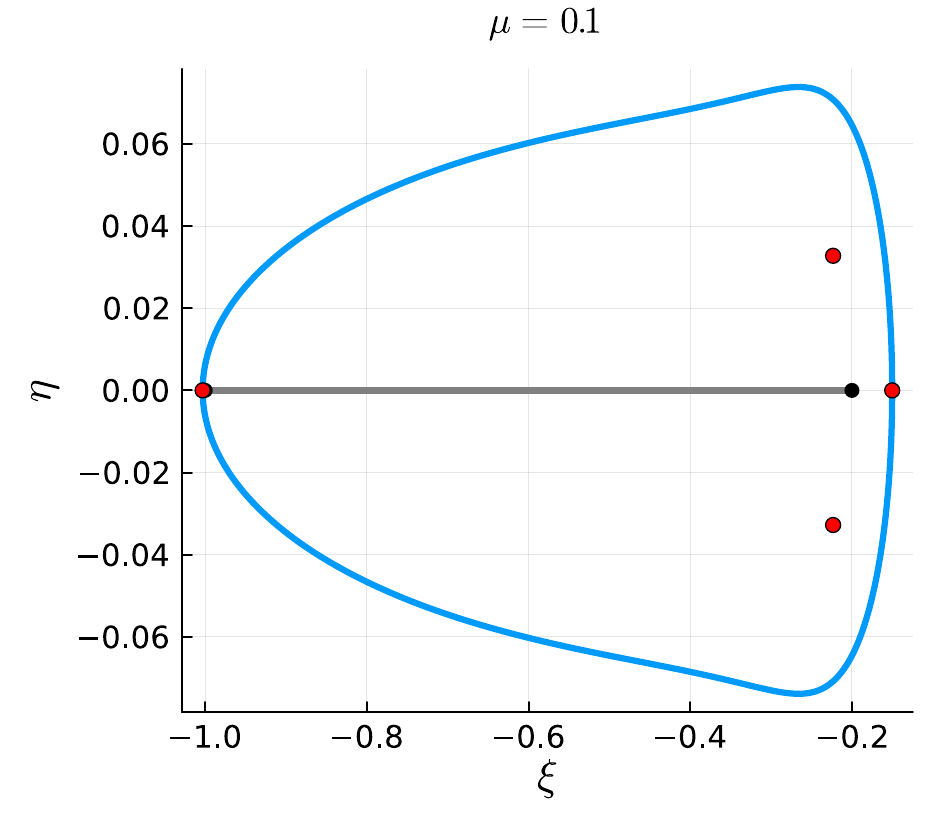} %
		\includegraphics[scale=0.4]{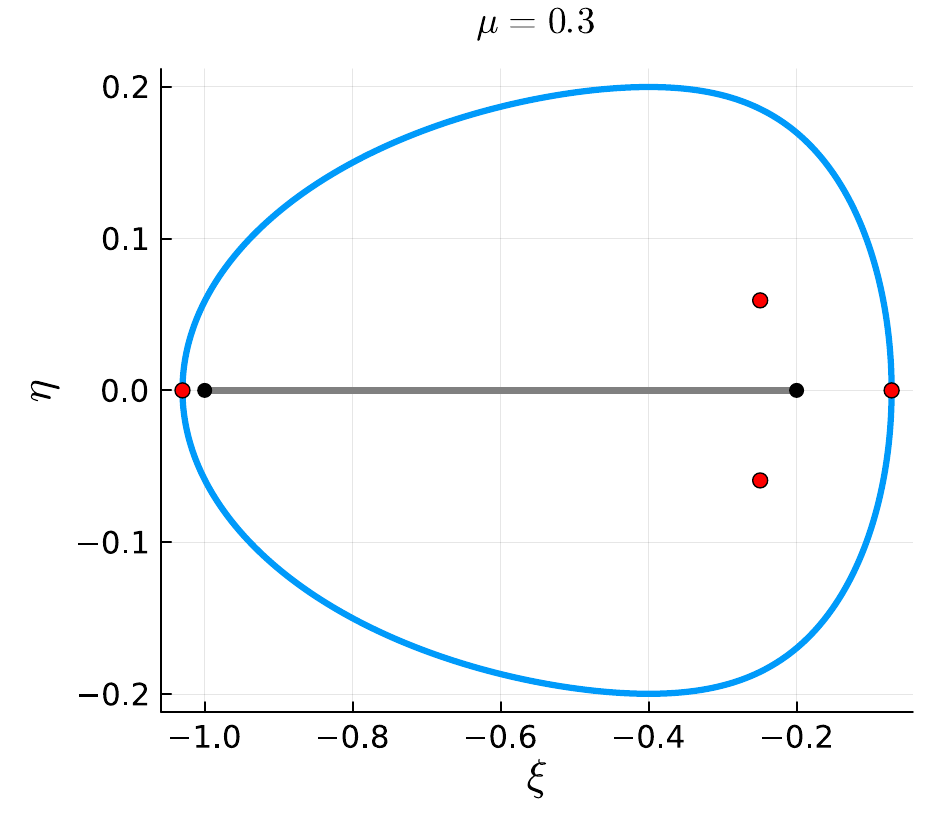} %
		\includegraphics[scale=0.4]{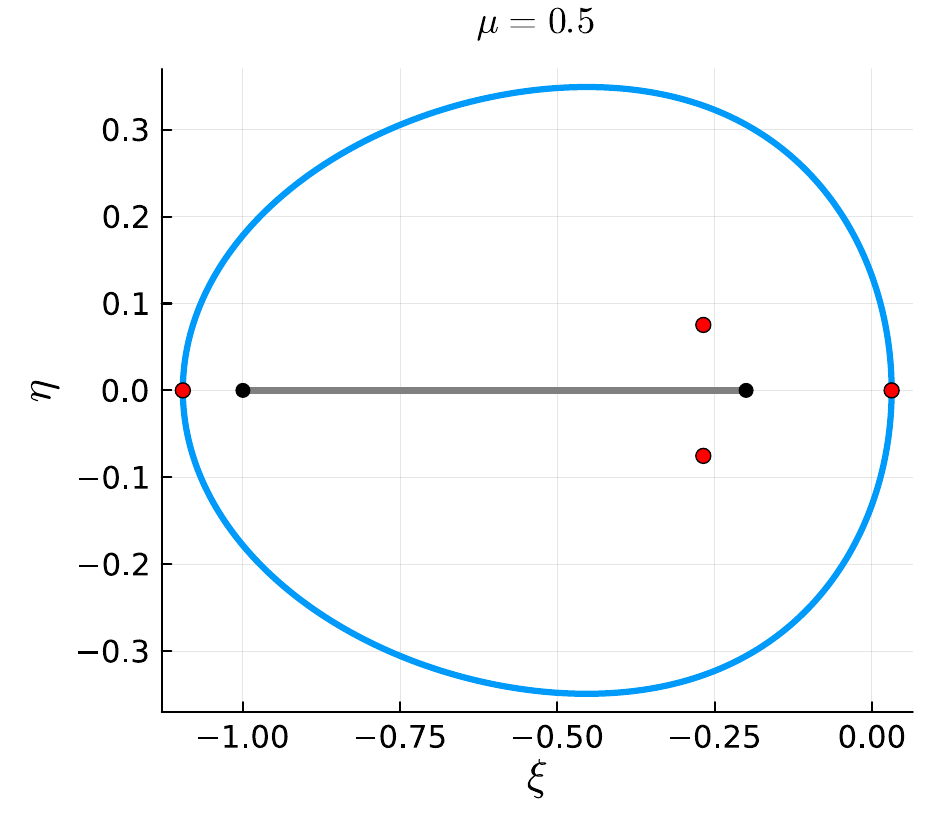} %
	\end{center}
	\caption{Top: Comparison between the delta-pulse response obtained via the integral representation (colored solid lines) and via the steepest descent method (black dashed lines) for several values of $t$ ($\tau_\sigma = 1$, $\tau_\epsilon = 5$, $c = 1$); bottom: steepest descent paths for various values of $\mu = x/ct = 0.01,\ 0.1, 0.3,\ 0.5$). \label{fig:case2}}
\end{figure}

\begin{figure}
	\begin{center}
		\includegraphics[scale=0.4]{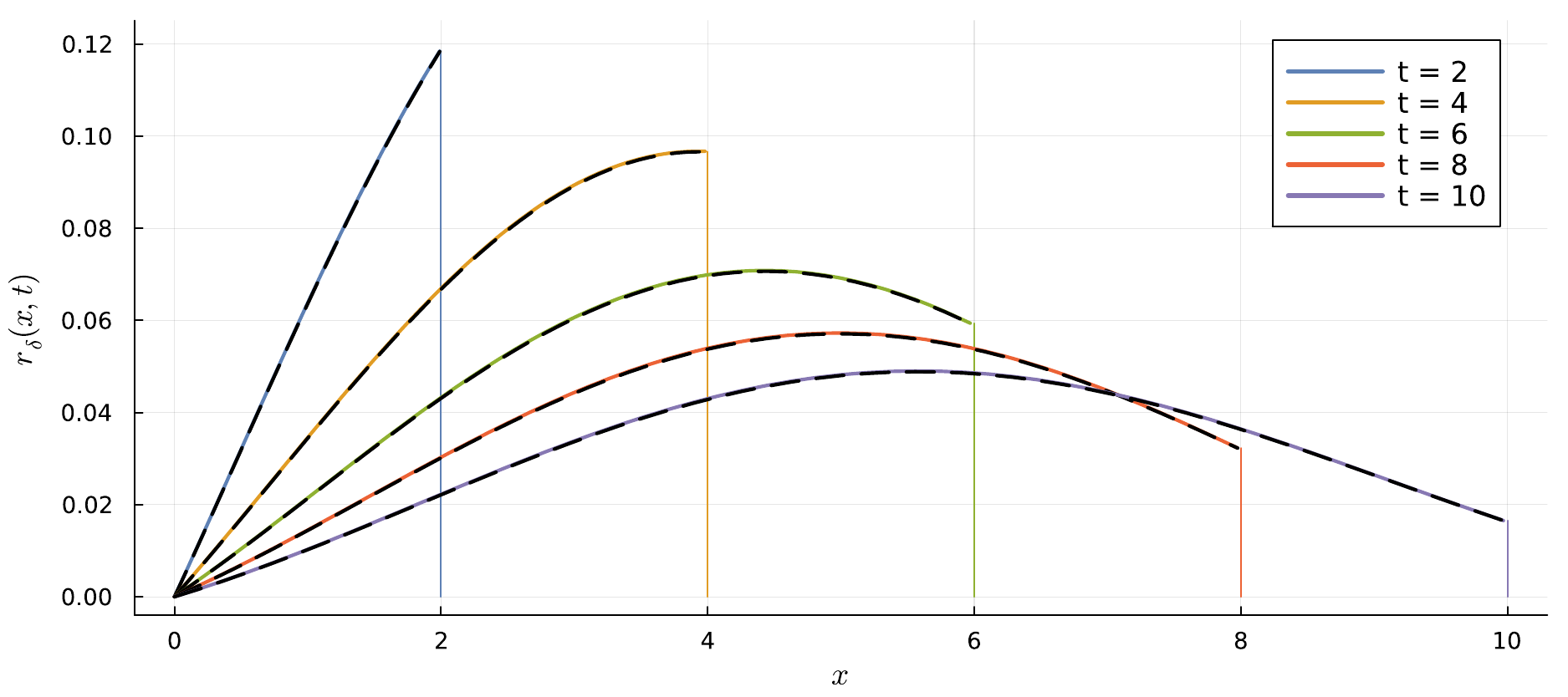}
		\includegraphics[scale=0.4]{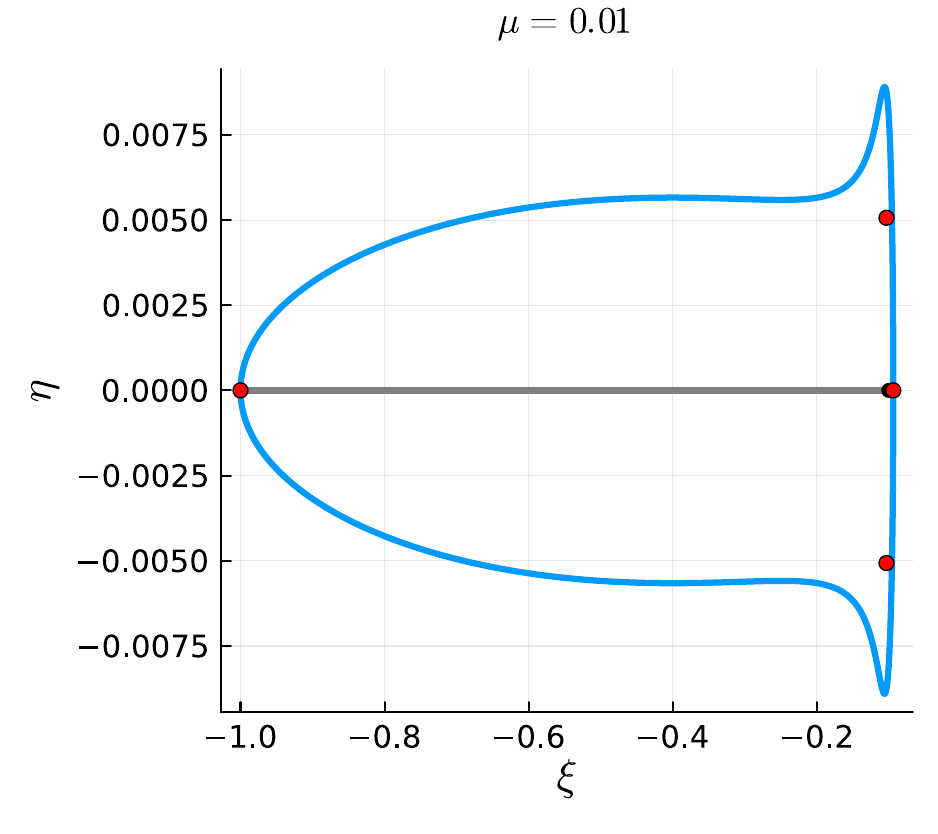} %
		\includegraphics[scale=0.4]{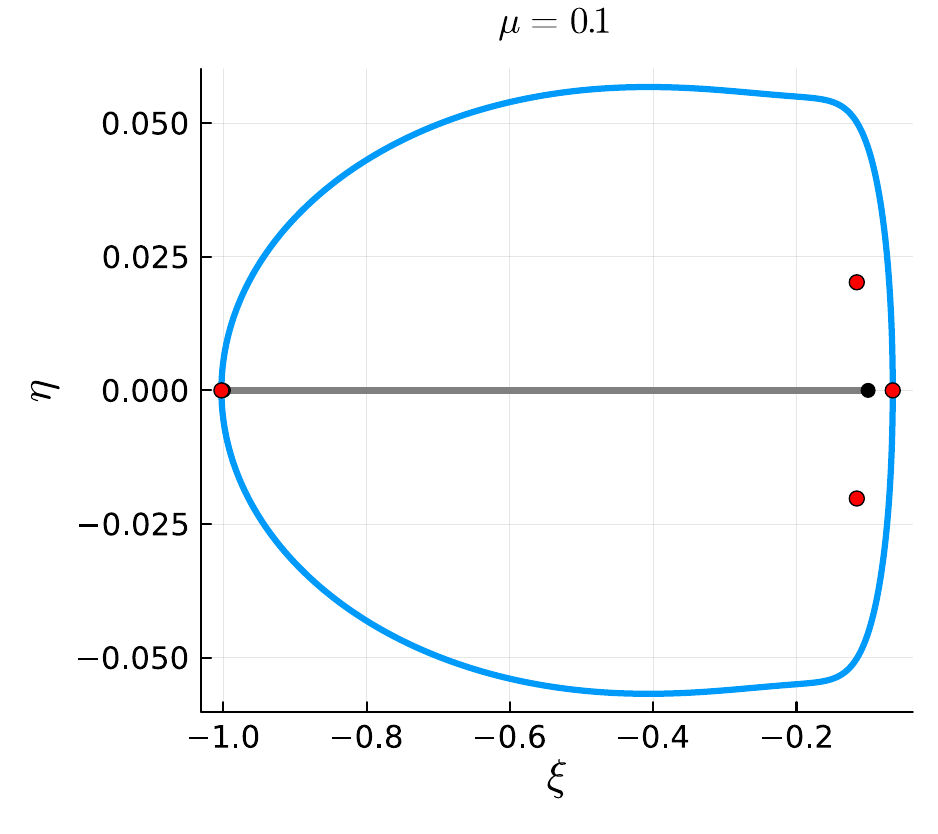} %
		\includegraphics[scale=0.4]{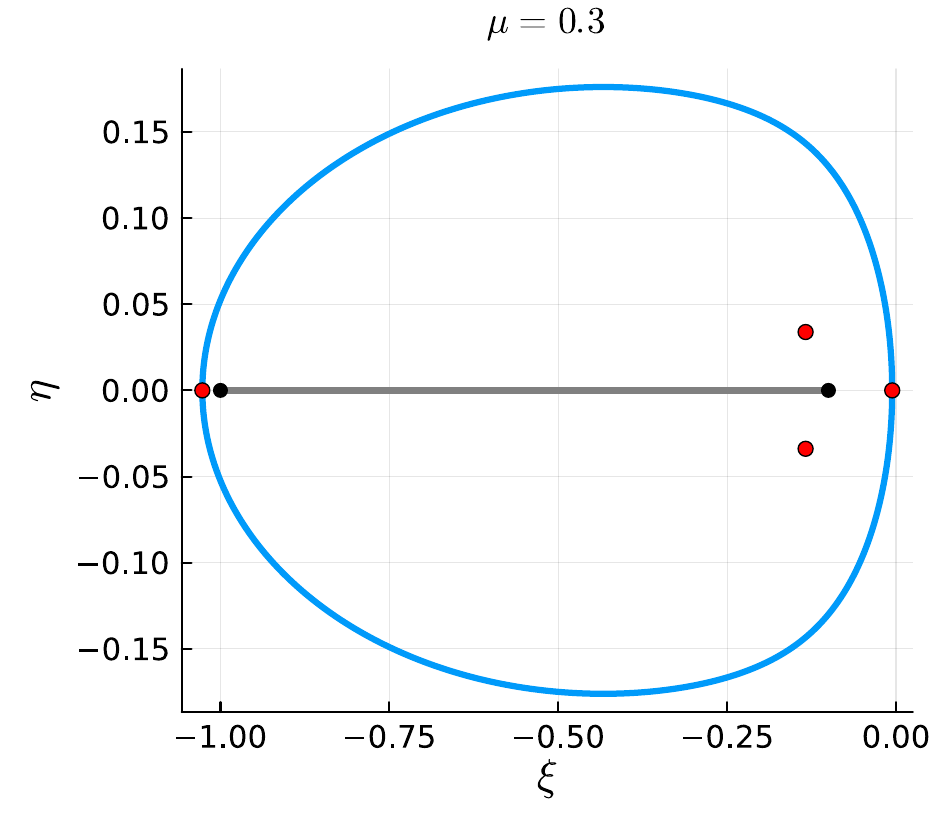} %
		\includegraphics[scale=0.4]{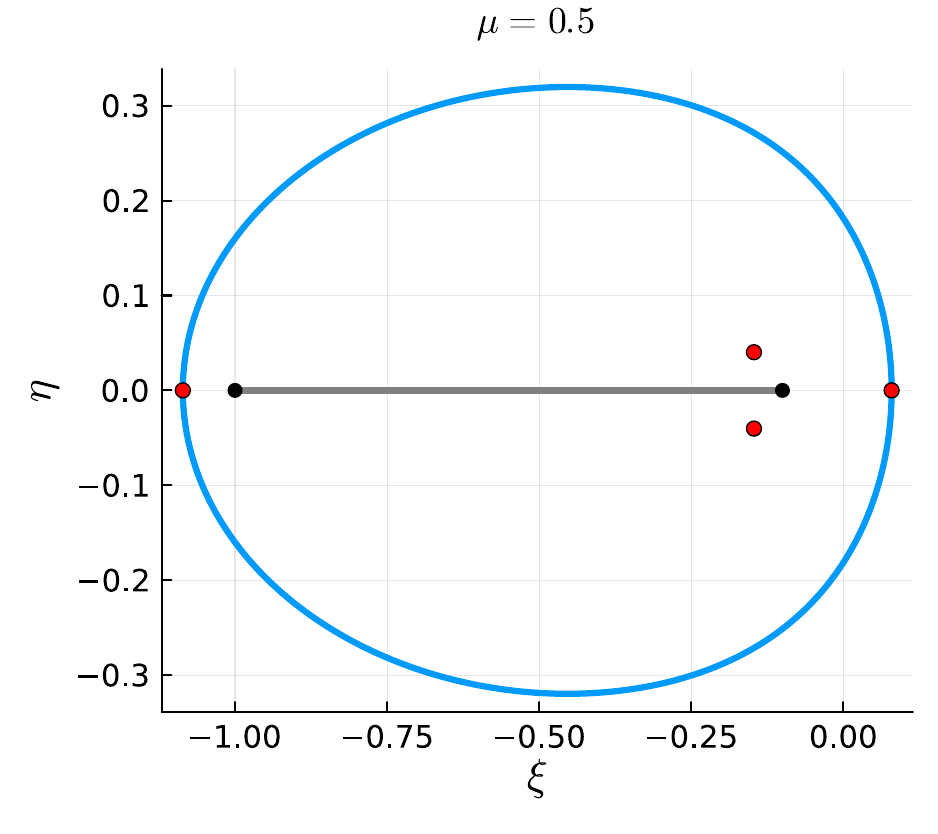} %
	\end{center}
	\caption{Top: Comparison between the delta-pulse response obtained via the integral representation (colored solid lines) and via the steepest descent method (black dashed lines) for several values of $t$ ($\tau_\sigma = 1$, $\tau_\epsilon = 10$, $c = 1$); bottom: steepest descent paths for various values of $\mu = x/ct = 0.01,\ 0.1, 0.3,\ 0.5$). \label{fig:case3}}
\end{figure}

\section{Conclusions} \label{Section 6: Conclusions}

In this work, we have studied the propagation of an impulsive (delta-pulse) wave in a semi-infinite, homogeneous, linear viscoelastic medium described by the Zener model, which belongs to the class of type I viscoelastic bodies and admits a finite wave-front velocity. Two complementary approaches have been developed and compared.

In the first approach, we derived an exact closed-form integral representation of the delta-pulse response by analytically inverting the Laplace transforms appearing in the Bromwich integral. The result involves modified Bessel functions of the first kind and Macdonald functions of half-integer order, the latter admitting efficient numerical evaluation
 via the Cohen--Villegas--Zagier algorithm, 
 thanks to its alternating-series structure. We verified that this representation correctly reduces to the known Maxwell model solution in the limit $\tau_{\epsilon} \to \infty$.

In the second approach, we characterized the steepest descent path (SDP) associated with the phase function of the Bromwich integral for the Zener model. We showed that the SDP satisfies an implicit equation involving a six-degree polynomial in the imaginary part, $\eta$, of the complex integration variable, which, being of degree three in $\eta^2$, can be solved explicitly. The four saddle points of the phase function were analyzed as functions of the dimensionless parameter $\mu = x/(ct)$, and the contour integral along the steepest descent path was reduced to a real-line integral that can be evaluated efficiently by standard numerical techniques, such as adaptive Gauss?Kronrod quadrature.

The numerical results presented in Figs.~\ref{fig:case1}--\ref{fig:case3}, obtained for three distinct values of $\tau_\epsilon$ (with $\tau_\sigma = 1$ and $c = 1$), confirm the excellent agreement between the two methods across a range of values of $\mu$, thereby providing mutual validation. Furthermore, the graphs presented in Figs.~\ref{fig:case1}--\ref{fig:case3} for $r_{\delta}\left( x,t\right)$ have also been reproduced computing \eqref{r_delta=L-1[n(s)]}--\eqref{eq:SLS n(s) (4.51)} with Talbot's method for the numerical evaluation of the inverse Laplace transform \cite{NumericalLaplace}.

The steepest descent method proves to be particularly effective as an independent computational tool, while the integral representation provides analytical insight into the structure of the solution and its dependence on the model parameters.

The present results extend and complement those obtained for the Maxwell model in our recent paper \cite{MainardiMentrelliSantander}, and lay the groundwork for further investigations of impulsive wave propagation in more general viscoelastic media, including those described by fractional-order constitutive equations.


\section*{Acknowledgements}

The work of A. M. and F. M. was carried out in the framework of the activities of the Italian National Group for Mathematical Physics (GNFM/INdAM).
Furthermore, A. M. acknowledges the Italian National Institute for Nuclear Physics (INFN), FLAG grant, for partial support.

\end{document}